\documentclass[11pt]{article}

\usepackage{geometry}
\usepackage{url}
\usepackage{graphicx,graphics}
\usepackage{subcaption}
\usepackage{amsmath,amssymb,latexsym,lscape,amsthm,mathrsfs}
\usepackage{xcolor}
\usepackage{lineno,cite}
\usepackage[unicode=true,pdfusetitle,
bookmarks=true,bookmarksnumbered=false,bookmarksopen=false,breaklinks=false,pdfborder={0 0 0},backref=false,colorlinks=false]
{hyperref}
\theoremstyle{plain}

\newtheorem{prop}{Proposition}
\newtheorem{lemma}{Lemma}
\theoremstyle{definition}
\newtheorem{definition}{Definition}
\newtheorem{remark}{Remark}
\usepackage{tikz}
\usepackage{adjustbox}
\usepackage{appendix}
\usepackage{mathrsfs}
\usepackage{array}
\usepackage{booktabs}
\graphicspath{{./}}
\definecolor{blue}{rgb}{0.0, 0.0, 1.0}
\definecolor{bleudefrance}{rgb}{0.19, 0.55, 0.98}
\definecolor{deepskyblue}{rgb}{0.0, 0.75, 1.0}
\definecolor{green}{rgb}{0.0, 0.5, 0.0}
\definecolor{darkblue}{rgb}{0.4, 0.70, 1.33}
\hypersetup{colorlinks,linkcolor={black},citecolor={green},urlcolor={black}}
\usepackage{comment}
\usepackage{cleveref}
\usepackage{braket}
\usepackage{mathabx}
\usepackage{epstopdf}

\newcommand{\R}{\mathbb{R}}

\newcommand{\Rzero}{\mathcal{R}_0}
\newcommand{\Rc}{\mathcal{R}_C}
\newcommand{\dist}{\emph{dist}}

\definecolor{VaccinateColor}{HTML}{99CBFF}

\usepackage{booktabs}
\usepackage{adjustbox} 

\usepackage[affil-it]{authblk} 
\usepackage[affil-it]{authblk} 


\title{Impact of Nirsevimab prophylaxis on RSV dynamics: a stage-structured modelling study}

\author[1,2]{Anna Autoriello}
\author[1,2]{Sabrina Averga}
\author[2]{Bruno Buonomo\thanks{Corresponding author:
\href{mailto:buonomo@unina.it}{buonomo@unina.it}.}}
\author[2]{Rossella Della Marca}
\author[3,4]{Alfredo Guarino\protect\footnotemark[1]}
\author[3,4]{Andrea Lo Vecchio}
\author[1,3]{Cristina Moracas}
\author[2]{Emanuela Penitente}
\author[3,4]{Marco Poeta}

\affil[1]{PhD National Programme in One Health approaches to infectious diseases and life science research, Department of Public Health, Experimental and Forensic Medicine, University of Pavia, viale Golgi 19, 27100 Pavia, Italy}
\affil[2]{Department of Mathematics and Applications, University of Naples Federico II, via Cintia, 80126 Naples, Italy}
\affil[3]{Pediatric Infectious Disease Unit, Department of Maternal and Child Health, University Hospital Federico II, via Pansini 5, 80131 Naples, Italy}
\affil[4]{Department of Translational Medical Science, University of Naples Federico II, via Pansini 5, 80131 Naples, Italy}

\date{}

\begin{document}

\maketitle

\vspace*{-30pt}

\begin{abstract}
Respiratory syncytial virus (RSV) is a leading cause of bronchiolitis and other lower respiratory tract infections in infants. Increased viral circulation in the post-COVID era and heterogeneous prevention strategies across regions have made RSV control more challenging. We develop a stage-structured, age-stratified Susceptible--Infected--Recovered (SIR) compartmental model tailored to the Italian setting to investigate the population-level impact of infant prophylaxis with Nirsevimab, a long-acting monoclonal antibody. Scenario-based simulations over a multi-year horizon show that increasing infant protection coverage substantially reduces RSV incidence among infants and also yields indirect benefits in older age groups. In particular, extending coverage to infants born outside the epidemic season further lowers cumulative incidence, although infant-targeted prophylaxis alone does not reduce the control reproduction number below the epidemic threshold in the parameter range explored. These findings suggest that broader and more consistent infant Nirsevimab coverage may reduce RSV burden and support the evaluation of alternative implementation strategies in the Italian context.
\end{abstract}

\medskip

\noindent { {\bf Keywords}: mathematical epidemiology, respiratory syncytial virus, stage-structured model, infant immunoprophylaxis, Nirsevimab.}
\normalsize

\section{Introduction}

Respiratory syncytial virus (RSV) is one of the leading causes of hospitalisation for bronchiolitis and other lower respiratory tract infections in infants, representing a major cause of pressure on paediatric healthcare systems~\cite{manzoni2025prevention,o2019causes}. With the introduction of new preventive interventions, especially long--acting monoclonal antibodies for infants, an urgent question is how real--world delivery choices translate into population-level impact~\cite{lopez2024potential}. Recent post--pandemic seasons, for instance in Italy, have highlighted how changes in respiratory virus circulation~\cite{iss_ili} and heterogeneous prevention policies~\cite{SIP,Corriere} can complicate planning based on surveillance alone. In this context, there is a clear need for mathematical tools that complement clinical and public health evidence by translating intervention assumptions into quantitative, population-level expectations and by enabling transparent comparisons across alternative implementation scenarios \cite{lang2022use}.

RSV is a major cause of acute respiratory infections worldwide and affects individuals across all age groups, with clinical manifestations ranging from mild to severe disease~\cite{WHO}. Although RSV infection occurs throughout the life course, susceptibility to severe disease is strongly age-dependent, being highest in early childhood and later adulthood. Furthermore, naturally acquired immunity provides only limited and short-term protection, resulting in frequent reinfections~\cite{manzoni2025prevention}.

Worldwide, RSV is associated each year with approximately 3.6 million hospital admissions and around 100{,}000 deaths among children under five years of age, with the highest mortality (nearly 50\%) observed during the first six months of life and in low- and middle-income countries~\cite{WHO}. Beyond childhood, RSV is increasingly recognised as an important cause of severe respiratory illness in older adults as well, particularly among individuals with underlying chronic conditions, and contributes substantially to hospitalisations and mortality in high-income settings~\cite{WHO}.

In Europe, RSV accounts annually for a considerable number of hospital admissions in both paediatric and adult populations. Estimates indicate that more than 200{,}000 hospital admissions occur annually among children under five years of age, with marked seasonal peaks during the winter months~\cite{PortaleEuropeo}. Among the adult population, European estimates indicate approximately 160{,}000 RSV-associated hospital admissions per year, with about 92\% of these occurring in individuals aged 65 years and older~\cite{PortaleEuropeo,osei2023estimation}. In Italy, RSV is estimated to cause around 26,000 hospitalisations annually among adults aged 60 years and above, resulting in approximately 1,800 deaths~\cite{manzoni2025prevention}.

Given the significant burden of RSV infection among infants, preventive strategies have improved substantially in recent years. For more than two decades, the humanised monoclonal antibody Palivizumab has been used to prevent severe RSV disease in preterm and high-risk infants~\cite{impact1998palivizumab}. However, its high cost and short half-life, requiring monthly administration during the RSV season, have limited its use to these high-risk groups~\cite{jha2016respiratory}. Currently, the primary preventive approach in infants relies on long-acting monoclonal antibodies \cite{WHO}.  Nirsevimab, a monoclonal antibody, is currently recommended for all infants younger than 8 months entering their first RSV season, or born during the epidemic period~\cite{CDCanticorpimonoclonali}. It is administered as a single intramuscular dose and provides immediate passive protection lasting approximately five months, covering an entire RSV season~\cite{nobili2024strategie}. Maternal immunisation has also been introduced as an additional option for infant protection, although its implementation remains limited in several European countries, including Italy~\cite{du2025impact,FondazioneVeronesistrategiedivaccinazione}. RSV vaccines for older adults have also become available as an additional preventive tool, but estimates of protection duration and effectiveness in routine use are still uncertain \cite{cdc2025vaccineforadults}.

In this context, mathematical modelling provides a complementary framework for investigating the transmission dynamics of infectious diseases and for gaining insight into the potential effects of intervention strategies \cite{hethcote2000mathematics}. In particular, compartmental models based on systems of ordinary differential equations are especially valuable in settings where surveillance data are incomplete, heterogeneous, or insufficient to support precise quantitative estimates, but still informative for exploring qualitative dynamics and comparative scenarios \cite{overton2020using,keeling2008modeling}. This is the case for respiratory syncytial virus, for which uncertainties in reporting, underdiagnosis, and limited age-specific data constrain purely data-driven approaches.

Over the past two decades, a variety of mathematical models have been developed to study RSV transmission and control. Acedo et al.~\cite{acedo2010mathematical} propose an age-stratified Susceptible--Infected--Recovered--Susceptible (SIRS) model with two groups, distinguishing infants in the first year of life from the rest of the population. Some modelling approaches have focused on immune dynamics rather than age stratification, introducing multi-stage compartmental formulations in which successive stages represent repeated RSV infections~\cite{weber2001modeling,greenhalgh2009backward} and, in some cases, incorporating the seasonal nature of the disease~\cite{weber2001modeling,arenas2008existence}.

More recent models have increasingly incorporated a stage-structured formulation stratified by age (SSAS models) to reflect the strong heterogeneity in susceptibility and infection dynamics across paediatric and adult populations~\cite{moore2014modelling,hogan2016exploring}. This modelling choice is further supported by recent epidemiological evidence showing that the severity of RSV-related disease is heavily concentrated in the first year of life, with infants accounting for the majority of severe cases~\cite{menegale2025impact}. More detailed stratifications have also been explored, including models with finely resolved age classes and seasonally varying transmission rates~\cite{hogan2017potential}. Other approaches, such as individual-based models~\cite{poletti2015evaluating}, explicitly represent contact structures within households and communities. While these approaches offer more detailed insights, they typically require extensive parametrisation and high-quality data.

SSAS models have also been widely applied to other infectious diseases, including measles, HIV/AIDS, tuberculosis, and respiratory infections in general~\cite{hethcote1997age,aldila2018mathematical,zhao2020dynamic,lee2021age,chen2024dynamical}. In particular, Zhao et al.~\cite{zhao2020dynamic} combine age stratification with nonconstant population dynamics and provides a methodological reference closely aligned with the model structure adopted in the present study.

In this paper, we develop a stage-structured SIR-type model stratified by age to investigate RSV transmission and prevention in the Italian context. The population is divided into three age classes reflecting the strong age dependence observed in RSV epidemiology, with particular attention to infancy and older adulthood. Consistent with current preventive practices and data availability, the model explicitly incorporates infant immunoprophylaxis with the long-acting monoclonal antibody Nirsevimab.  This framework is used to derive the basic and control reproduction numbers and to run multi-year scenario-based simulations that quantify how alternative assumptions of Nirsevimab delivery, at birth and through catch-up administration, affect annual incidence and its age distribution over consecutive years.

The remainder of the paper is organised as follows. Section~\ref{sec:model} presents the formulation of the mathematical model and outlines its main structural assumptions. Section~\ref{sec:rep_num} introduces the reproduction numbers, discussing their epidemiological meaning and their role in characterising RSV transmission dynamics. Section~\ref{sec:param} describes the model parametrisation, and Section~\ref{sec:num_res} reports the numerical results concerning the exploration of different intervention scenarios and the assessment of their impact on key epidemiological outcomes. Finally, Section~\ref{sec:conclusion} discusses the main findings and their implications for RSV prevention in the proposed modelling framework.

\section{Model formulation}\label{sec:model}
\subsection{Age structure and compartments}
We consider a population stratified into three age groups: infants ($<$1 year), children--and--adults (1-64 years), and seniors ($\geq 65$ years). Within each group, the transmission of respiratory syncytial virus (RSV) is modelled using a SIR--type compartmental framework, where individuals are classified as susceptible ($S_i$), infected ($I_i$), and recovered ($R_i$), for $i = 1,2,3$. 

The choice of a SIR structure is motivated by the aim of keeping the model analytically tractable while still capturing the essential features of RSV transmission. In particular, several models developed for RSV neglect an exposed class and assume direct progression from susceptibility to infection \cite{greenhalgh2000subcritical,greenhalgh2009backward}. Several studies have also reported similarities between bovine and human RSV in terms of genomic structure, transmission, and clinical features \cite{da2024interactions,baker1991human,borchers2013respiratory,taylor2013bovine}. Hence, although age-structured SEIR models with explicit latent stages have been proposed \cite{moore2014modelling,zhang2012existence,weber2001modeling,hogan2016exploring}, the use of a SIR framework remains consistent with the available evidence and allows for a more tractable model structure.

\subsection{Protection dynamics}
In the infant group, we account for the administration of Nirsevimab by introducing additional compartments. We denote by $P$ the class of infants who are protected by Nirsevimab prophylaxis, and divide the susceptibles into two compartments: $S_{1e}$, comprising infants eligible for mAb prophylaxis who have not yet received it, and $S_{1w}$, comprising susceptible infants whose antibody-derived protection or natural immunity has waned.
We assume that Nirsevimab is administered at birth to a fraction $p$ of newborns; therefore, this fraction directly enters the protected compartment, $P$. The remaining $1-p$ newborns enter the $S_{1e}$ compartment, from which they may subsequently receive mAb prophylaxis at rate $\psi$. The protected infants are assumed to have partial protection, characterised by an ineffectiveness factor $\sigma \in (0,1)$, and to lose protection at rate $\omega$, i.e. the inverse of the average duration of protection. Infants whose antibody-derived protection wanes over time re-enter susceptibility. A fraction $q$ of them is still under one year of age and is placed in the $S_{1w}$ compartment; the remaining 
$1-q$ is over one year and, therefore, moves to the $S_2$ compartment. Consequently, infants are subdivided into five disjoint compartments: $P$, $S_{1e}$, $S_{1w}$, $I_1$, and $R_1$. 
For children--and--adults and senior individuals (age groups 2 and 3), we adopt the standard SIR structure. 

Although prevention of RSV in older adults is epidemiologically relevant, it is not explicitly included in the present model, as the primary aim of this study is to investigate the population-level impact of infant prophylaxis with Nirsevimab. This choice is also consistent with the Italian setting considered in this study, where RSV imposes a documented burden among Italian older adults and high-risk individuals \cite{DomnichCalabro2024,Puggina2025} but evidence on RSV vaccination in older adults remains recent and limited, with available data mainly referring to early adopters in specific regional settings \cite{Domnich2025}.

\subsection{Demography, ageing and deaths}
Let $S_i(t)$ denote the number of susceptible individuals in age group $i=2,3$ at time t. Let $I_i(t)$ and $R_i(t)$ denote the number of infectious and recovered individuals in age group $i=1,2,3$ at time t, respectively. Furthermore, $S_{1e}(t)$ denotes the number of susceptible infants eligible for mAb prophylaxis at time t, $S_{1w}(t)$ the number of susceptible infants whose protection or immunity has waned at time t, and $P(t)$ the number of protected infants at time t. The total population in each age group at time t is given by:
\[
N_1(t) = P(t) + S_{1e}(t)+S_{1w}(t) + I_1(t) + R_1(t),
\]
\[\quad N_i(t) = S_i(t) + I_i(t) + R_i(t), \text{  for } i=2,3,
\]
and the total population size at time t is $N(t) = N_1(t) + N_2(t) + N_3(t)$.

Individuals move from $i$--th age group to the $(i+1)$--th age group at a rate $\eta_i$, $i=1,2$, corresponding to the inverse of the average timespan spent in the $i$-th age group. We assume that individuals retain their disease status upon ageing.

The model accounts for demographic turnover through births, represented by a net inflow $\Lambda$ into the infant age groups $P$ and $S_{1e}$, and through natural mortality, represented by the age specific rates $\mu_i$, $i=1,2,3$; this turnover is essential to represent interventions delivered at birth.
Due to the short timespan of our analysis, we assume that net inflow $\Lambda$ is constant  (see, e.g., \cite{acedo2010mathematical,moore2014modelling,hogan2016exploring,hogan2017potential,hodgson2020evaluating}).

In addition to natural mortality, the model includes disease-induced mortality in the first and third age groups, occurring at rates $d_1$ and $d_3$, respectively. Disease-induced mortality is not considered in the children-and-adult group, as empirical evidence indicates that it primarily affects high-risk individuals, who represent only a small fraction of the total adult population \cite{ECDC_RSV_RRA_2022}.

\begin{figure}
    \tikzstyle{P} = [
        rectangle, minimum width=1cm, minimum height=1cm,
        text centered, draw=black,fill=blue!20
    ]
    \tikzstyle{S} = [
        rectangle, minimum width=1cm, minimum height=1cm,
        text centered, draw=black,fill=yellow!20
    ]
    \tikzstyle{I} = [
        rectangle, minimum width=1cm, minimum height=1cm,
        text centered, draw=black,fill=red!20
    ]
    \tikzstyle{R} = [
        rectangle, minimum width=1cm, minimum height=1cm,
        text centered, draw=black,fill=green!20
    ]
    \tikzstyle{arrow} = [thick,->,>=stealth,blue]
    \tikzstyle{arrowfit} = [thick,-,blue]
    \tikzstyle{dashedarrow} = [arrow, densely dashed,black]
    \tikzstyle{dashedfitarrow} = [thick,-,densely dashed,black]
    \tikzstyle{greenarrow} = [arrow, green,densely dashed]
    \tikzstyle{blackarrow} = [arrow,black]
    \tikzstyle{redarrow} = [arrow,red,densely dashed]
    \tikzstyle{arrowfitred} = [thick,-,red,densely dashed]
    \tikzstyle{orangearrow} = [arrow, yellow!30!red, densely dashed]
    \tikzstyle{orangefitarrow} = [thick,-,yellow!30!red, densely dashed]
    
    \centering
    \begin{tikzpicture}[node distance=3cm, scale=0.9, transform shape]

        \node (I1) [I]  {\Large $\boldsymbol{I_1}$};
        \node (S1w) [S, left of=I1]{\Large $\boldsymbol{S_{1w}}$};
        \node (S1e) [S, left of=S1w]{\Large $\boldsymbol{S_{1e}}$};
        \node (R1) [R, right of=I1]{\Large $\boldsymbol{R_1}$};
        \node (P) [P, above of=S1e]{\Large $\boldsymbol{P}$};
        \node (S2) [S, below of=S1w]{\Large $\boldsymbol{S_2}$};
        \node (I2) [I, right of=S2]{\Large $\boldsymbol{I_2}$};
        \node (R2) [R, right of=I2]{\Large $\boldsymbol{R_2}$};
        \node (S3) [S, below of=S2]{\Large $\boldsymbol{S_3}$};
        \node (I3) [I, right of=S3]{\Large $\boldsymbol{I_3}$};
        \node (R3) [R, right of=I3]{\Large $\boldsymbol{R_3}$};

        \draw [blackarrow] (-5.8,2.5) -- node[pos =0.3, right] {\footnotesize{$\mu_1$}}(-5.3,2);

        \draw [blackarrow] (-2.8,-0.5) -- node[pos =0.3, right] {\footnotesize{$\mu_1$}}(-2.3,-1);

        \draw [blackarrow] (-5.8,-0.5) -- node[pos =0.3, right] {\footnotesize{$\mu_1$}}(-5.3,-1);
        
        \draw [blackarrow] (0.2,-0.5) -- node[pos =0.3, right] {\footnotesize{$\mu_1+d_1$}}(0.7,-1);
        
        \draw [blackarrow] (3.2,-0.5) -- node[pos =0.3, right] {\footnotesize{$\mu_1$}}(3.7,-1);

        \draw [blackarrow] (-2.8,-3.5) -- node[pos =0.3, right] {\footnotesize{$\mu_2$}}(-2.3,-4);
        
        \draw [blackarrow] (0.2,-3.5) -- node[pos =0.3, right] {\footnotesize{$\mu_2$}}(0.7,-4);
        
        \draw [blackarrow] (3.2,-3.5) -- node[pos =0.3, right] {\footnotesize{$\mu_2$}}(3.7,-4);

        \draw [blackarrow] (-2.8,-6.5) -- node[pos =0.3, right] {\footnotesize{$\mu_3$}}(-2.3,-7);
        
        \draw [blackarrow] (0.2,-6.5) -- node[pos =0.3, right] {\footnotesize{$\mu_3+d_3$}}(0.7,-7);
        
        \draw [blackarrow] (3.2,-6.5) -- node[pos =0.3, right] {\footnotesize{$\mu_3$}}(3.7,-7);
        
        \draw [arrow] (S1w) -- node[above] {\footnotesize{\textcolor{black}{$\mathrm{FoI}_1$}}}(I1);
        \draw [arrow] (S2) -- node[above] {\footnotesize{\textcolor{black}{$\mathrm{FoI}_2$}}}(I2);
        \draw [arrow] (S3) -- node[above] {\footnotesize{\textcolor{black}{$\mathrm{FoI}_3$}}}(I3);
        \draw[arrowfit](-5.5,3) -- node[above]{\textcolor{black}{\footnotesize{$\sigma\,\mathrm{FoI}_1$}}}(0,3);
        \draw[arrow] (0,3) -- (0,0.5);
        \draw[arrowfit] (S1e.south) -- (-6,-1.2);
        \draw[arrowfit]  (-6,-1.2) -- node[above, xshift=-1.2cm] {\footnotesize{\textcolor{black}{$ \mathrm{FoI}_1$}}}(-0.2,-1.2);
        \draw[arrow]  (-0.2,-1.2) -- (-0.2,-0.5);
        
        \draw [arrow] (I1) -- node[above] {\footnotesize{\textcolor{black}{$\gamma_1$}}}(R1);
        \draw [arrow] (I2) -- node[above] {\footnotesize{\textcolor{black}{$\gamma_2$}}}(R2);
        \draw [arrow] (I3) -- node[above] {\footnotesize{\textcolor{black}{$\gamma_3$}}}(R3);
        
        \draw [arrowfit] (2.8,0.5) -- (2.8,1);
        \draw [arrowfit] (2.8,1) -- node[pos=0.26, above] {\footnotesize{\textcolor{black}{$\nu_1$}}} (-2.8,1);
        \draw [arrow] (-2.8,1) -- (-2.8,0.5);
        \draw [arrowfit] (2.8,-2.5) -- (2.8,-2);
        \draw [arrowfit] (2.8,-2) -- node[pos=0.26, above] {\footnotesize{\textcolor{black}{$\nu_2$}}} (-2.8,-2);
        \draw [arrow] (-2.8,-2) -- (-2.8,-2.5);
        \draw [arrowfit] (2.8,-5.5) -- (2.8,-5);
        \draw [arrowfit] (2.8,-5) -- node[pos=0.26, above] {\footnotesize{\textcolor{black}{$\nu_3$}}} (-2.8,-5);
        \draw [arrow] (-2.8,-5) -- (-2.8,-5.5);
        
        \draw[dashedarrow] (S1w) -- node[left] {\footnotesize{$\eta_1$}}(S2);
        \draw[dashedarrow] (I1) -- node[left] {\footnotesize{$\eta_1$}}(I2);
        \draw[dashedarrow] (R1) -- node[left] {\footnotesize{$\eta_1$}}(R2);
        \draw[dashedarrow] (S2) -- node[left] {\footnotesize{$\eta_2$}}(S3);
        \draw[dashedarrow] (I2) -- node[left] {\footnotesize{$\eta_2$}}(I3);
        \draw[dashedarrow] (R2) -- node[left] {\footnotesize{$\eta_2$}}(R3);
        \draw[dashedfitarrow] (-6.2,-0.5) -- node[left] {\footnotesize{\textcolor{black}{$\eta_1$}}} (-6.2,-2.8); 
        \draw[dashedarrow] (-6.2,-2.8) -- (-3.5,-2.8);

        \draw[blackarrow] (-7.5,0) -- node[above,xshift=-0.2cm] {\footnotesize{$(1-p) \Lambda$}} (-6.5,0);
        \draw[blackarrow] (-7.5,3) -- node[above,xshift=-0.2cm] {\footnotesize{$p \Lambda$}} (-6.5,3);

        \draw [greenarrow] (S1e) -- node[left] {\footnotesize{\textcolor{black}{$ \psi$}}}(P);

        \draw[orangearrow] (-5.5, 2.8) -- node[left,xshift=-0.1cm] {\footnotesize{\textcolor{black}{$ q\omega$}}}(-3.1,0.5);

        \draw[orangefitarrow] (-6.5, 2.8) -- (-8,2.8);
        \draw[orangefitarrow] (-8,2.8) -- (-8,-3.2);
        \draw[orangearrow] (-8,-3.2) -- node[below,xshift=-0.1cm] {\footnotesize{\textcolor{black}{$ (1-q)\omega$}}} (-3.5,-3.2);
        

            \end{tikzpicture}
            \caption{Flowchart of the model~\eqref{eq: sys-am}. Blue arrows represent transmission, recovery, and the loss of natural immunity. Black arrows indicate transitions between age groups, recruitments and deaths.  Green arrows represent administration of Nirsevimab to infants, while orange arrows denote the waning of protection provided by Nirsevimab.}
            \label{fig: flowchart}
        \end{figure}

\subsection{Transmission dynamics}
Viral transmission occurs through contact between susceptible and infected individuals across all age groups. The transmission is described by a mass-action incidence term, i.e., the force of infection acting on the $i$-th group, $\text{FoI}_i$, is given by:
\begin{equation}\label{eq: FoI}
    \text{FoI}_i = \sum_{j=1}^3 \beta_{ij}I_j, \quad i =1,2,3,
\end{equation}
where the age-specific transmission rate $\beta_{ij} = c_{ij} \pi_{ij}$ is defined as the product of the contact rate $c_{ij}$ between individuals in age groups $i$ and $j$, and the probability $\pi_{ij}$ of successful transmission given contact with an infectious individual from group $j$.

The use of a mass-action incidence, also known as density-dependent incidence, assumes that individual contact rates increase linearly with population density. This modelling assumption is well-suited for describing the transmission dynamics of diseases spread through droplets ~\cite{martcheva2015introduction,diekmann2000mathematical}, as is the case for RSV transmission~\cite{kutter2018transmission,MemorialSloanInfectious}. Therefore, this is a widely accepted assumption for RSV models~\cite{weber2001modeling,arenas2008existence,acedo2010mathematical,moore2014modelling,hogan2016exploring}, although standard incidence (also referred to as frequency-dependent incidence) is also sometimes used~\cite{greenhalgh2009backward,hogan2017potential,hodgson2020evaluating}. However, the latter is mostly used when the number of contacts per infectious individual is not expected to increase proportionally with population density (for example, sexually transmitted diseases)~\cite{diekmann2000mathematical}.

The contact patterns are described by a contact matrix $ C = (c_{ij})_{i,j=1}^{3}$, where each element $c_{ij}$ denotes the average number of contacts per unit of time that an individual in age group $j$ has with individuals in age group $i$. In our framework, we neglect the contact rates $c_{13}$ and $c_{23}$, i.e., contacts from seniors to infants and children--and--adults, respectively. This assumption reflects the limited number of contacts that senior adults have with younger adults and infants. It is based on empirical evidence: the primary settings in which contacts take place are families, schools and workplaces. However, individuals with 65 years or more are not present in any school or work settings, so their contacts with adults or children are only in the household setting. However, empirical studies \cite{prem2017projecting,mossong2008social} show that in industrialised countries, the family structure is predominantly two-generational, so the rates $c_{13}$ and $c_{23}$ are negligible compared to the others.

After contracting the virus, infected individuals recover and move from compartment $I_i$ to compartment $R_i$ at rate $\gamma_i$, which corresponds to the inverse of the average infectious period. Once in compartment $R_i$, individuals are assumed to acquire natural immunity to the disease. However, this immunity wanes over time. In age groups 2 and 3, recovered individuals return to compartment $S_i$ at rate $\nu_i$, which is the inverse of the average duration of natural immunity, whereas in the first age group they return to compartment $S_{1w}$ at rate $\nu_1$.

\subsection{The balance equations}
To formalise the epidemiological processes described above, we express the model as a system of balance equations that describe how the population evolves over time. Each equation represents the temporal change in the number of individuals in a given epidemiological compartment and incorporates all mechanisms through which individuals enter or leave that compartment, including infection, recovery, loss of immunity, immunisation, ageing, births, and deaths.

All state variables are functions of time and denote the size of the corresponding population group at any given moment. For each compartment, the corresponding equation is constructed by accounting for the contributions of the relevant biological and demographic processes, so that the rate of change reflects the net effect of all inflows and outflows.

This formulation results in a system of ordinary differential equations, where the time derivative of each state variable represents its instantaneous rate of variation. Collectively, these equations provide a complete and transparent description of the RSV transmission dynamics under the assumptions introduced above:
\begin{subequations}
\allowdisplaybreaks
	\label{eq: sys-am} 
	\begin{align}
		\label{eq: sys-am1}
        \dot{P} &= p\Lambda + \psi S_{1e} -\sigma P(\beta_{11}I_1 + \beta_{12}I_2 )  -(\mu_1+\omega)P,\\
    \label{eq: sys-am2}
    \dot{S}_{1e} &=(1-p)\Lambda - S_{1e}(\beta_{11}I_1 + \beta_{12}I_2  )  -(\mu_1+\eta_1+\psi)S_{1e},\\
    \label{eq: sys-am3}
    \dot{S}_{1w} &= q\omega P- S_{1w}(\beta_{11}I_1 + \beta_{12}I_2 ) - (\mu_1+\eta_1) S_{1w} + \nu_1 R_1,\\
    \label{eq: sys-am4}
    \dot{I}_1 &= \left(\sigma P
    + S_{1e} + S_{1w}\right)(\beta_{11}I_1 + \beta_{12}I_2 )- (\mu_1+\gamma_1+d_1+\eta_1)I_1,\\
    \label{eq: sys-am5}
    \dot{R}_1 &= \gamma_1 I_1  - (\mu_1+\eta_1+\nu_1)R_1,\\
    \label{eq: sys-am6}
    \dot{S}_2 &=  (1-q)\omega P + \eta_1(S_{1e}+S_{1w}) - S_2(\beta_{21}I_1 + \beta_{22}I_2 ) - (\mu_2+\eta_2)S_2+ \nu_2 R_2,\\
    \label{eq: sys-am7}
    \dot{I}_2 &= \eta_1I_1 + S_2(\beta_{21}I_1 + \beta_{22}I_2 ) - (\mu_2+\gamma_2+\eta_2)I_2, \\
    \label{eq: sys-am8}
    \dot{R}_2 &= \eta_1R_1 + \gamma_2 I_2 - (\mu_2+\eta_2+\nu_2)R_2,\\
    \label{eq: sys-am9}
    \dot{S}_3 &= \eta_2S_2 -  S_3( \beta_{31}I_1+\beta_{32}I_2 + \beta_{33}I_3) - \mu_3 S_3+\nu_3 R_3, \\
    \label{eq: sys-am10}
    \dot{I}_3 &= \eta_2I_2+ S_3( \beta_{31}I_1 +\beta_{32}I_2 + \beta_{33}I_3) - (\mu_3+\gamma_3+d_3)I_3,\\
    \label{eq: sys-am11}
    \dot{R}_3 &=  \eta_2R_2 +\gamma_3 I_3 - (\mu_3+\nu_3)R_3,
	\end{align}
\end{subequations}

\noindent where the upper dots denote the time derivatives. The system is equipped with the following initial conditions:
\begin{equation} \label{eq: IC-am}
\begin{gathered}
  P(0)\geq 0, \quad S_{1e}(0) \geq 0, \quad S_{1w}(0) \geq 0, \quad I_1(0) \geq 0, \quad R_1(0) \geq 0, \\
  S_i(0) \geq 0, \quad I_i(0) \geq 0, \quad R_i(0) \geq 0, \quad i =2,3.
  \end{gathered}
\end{equation}

Finally, note that system \eqref{eq: sys-am} -- \eqref{eq: IC-am} can be rewritten in the vector form:
\begin{equation*}
	\dot{\mathbf{x}}(t) = \mathbf{F}\bigl(\mathbf{x}(t)\bigr), \qquad \mathbf{x}(0) = \mathbf{x}_0,
\end{equation*}
where $\mathbf{x}(t) \in \mathbb{R}_+^{11}$ is the vector of the state variables and $\mathbf{F}(\mathbf{x}) = (f_i(\mathbf{x}(t)))_{i=1, \dots, 11}$ denotes the autonomous vector field.

\begin{table}
    \centering
    \footnotesize
    \renewcommand{\arraystretch}{1.2}
    \begin{tabular}{
        >{\centering\arraybackslash}p{1.1cm} 
        >{\raggedright\arraybackslash}p{7.9cm} 
        >{\centering\arraybackslash}p{4.5cm} 
    }
        \toprule
        \textbf{Par.} & \textbf{Description} & \textbf{Baseline value} \\
        \midrule
        $t_f$ & Temporal horizon & $5$ years \\
        
        $\Lambda$ & Recruitment of newborns & $1{,}013.48$ $\text{ind} \cdot \text{days}^{-1}$  \\

        $p$ & Fraction of infants who receive Nirsevimab at birth & $0.475$ \\ 

        $q$ & Fraction of infants losing Nirsevimab protection before 1 year & $0.08$ \\

        $\omega$ & Rate of loss of immunity given by Nirsevimab & $0.56 \cdot 10^{-2}$ days$^{-1}$ \\ 

        $\sigma$ & Factor of ineffectiveness of Nirsevimab &  0.25  \\

        $\psi$ & Rate of administration of Nirsevimab after birth in infants & $0.44 \cdot 10^{-2}$ days$^{-1}$  \\

        $\mu_1$ & Natural death rate of infants & $7.05 \cdot 10^{-6}$ days$^{-1}$  \\

        $\mu_2$ & Natural death rate of children--and--adults & $5.48 \cdot 10^{-6}$ days$^{-1}$ \\

        $\mu_3$ & Natural death rate of seniors & $1.18 \cdot 10^{-4}$ days$^{-1}$ \\

        $d_1$ & Disease-induced death rate of infants & $2.75 \cdot 10^{-5}$ days$^{-1}$ \\

        $d_3$ & Disease-induced death rate of seniors & $2.2 \cdot 10^{-5}$ days$^{-1}$ \\

        $\eta_1$ & Ageing rate from infants to children--and--adults & $0.27 \cdot 10^{-2}$ days$^{-1}$ \\

        $\eta_2$ & Ageing rate from children--and--adults to seniors & $4.28 \cdot 10^{-5}$ days$^{-1}$ \\

        $\beta_{11}$ & Transmission rate from infants to infants & $1.28\cdot10^{-10}$ $\text{ind}^{-1} \cdot \text{days}^{-1}$ \\

        $\beta_{12}$ & Transmission rate from children--and--adults to infants & $3.52 \cdot 10^{-8}$ $\text{ind}^{-1} \cdot \text{days}^{-1}$  \\

        $\beta_{21}$ & Transmission rate from infants to children--and--adults & $5.05 \cdot 10^{-9}$ $\text{ind}^{-1} \cdot \text{days}^{-1}$ \\

        $\beta_{22}$ & Transmission rate from children--and--adults to children--and--adults & $9.3 \cdot 10^{-9}$ $\text{ind}^{-1} \cdot \text{days}^{-1}$ \\

        $\beta_{31}$ & Transmission rate from infants to seniors & $5 \cdot 10^{-10}$ $\text{ind}^{-1} \cdot \text{days}^{-1}$ \\

        $\beta_{32}$ & Transmission rate from children--and--adults to seniors & $8 \cdot 10^{-9}$ $\text{ind}^{-1} \cdot \text{days}^{-1}$ \\

        $\beta_{33}$ & Transmission rate from seniors to seniors & $10^{-9}$ $\text{ind}^{-1} \cdot \text{days}^{-1}$  \\

        $\gamma_1$ & Recovery rate of infants & $0.667 \cdot 10^{-1}$ days$^{-1}$  \\

        $\gamma_2$ & Recovery rate of children--and--adults & $0.2$ days$^{-1}$ \\

        $\gamma_3$ & Recovery rate of seniors & $0.125$ days$^{-1}$ \\

        $\nu_1$ & Immunity loss rate of infants & $0.43 \cdot 10^{-2}$ days$^{-1}$ \\

        $\nu_2$ & Immunity loss rate of children--and--adults & $0.43 \cdot 10^{-2}$ days$^{-1}$ \\

        $\nu_3$ & Immunity loss rate of seniors & $0.43 \cdot 10^{-2}$ days$^{-1}$  \\

        \bottomrule
    \end{tabular}
    \caption{Synoptic summary of parameters in system~\eqref{eq: sys-am}, with description and baseline values. Demographic parameter values (i.e., $\Lambda$ and $\mu_i$, $i=1,2,3$) are obtained from the Italian National Institute of Statistics (ISTAT)~\cite{demoincifre}. Epidemiological parameters values for $\gamma_i$ and $\nu_i$, $i=1,2,3$ are obtained from \cite{li2025relating} and \cite{acedo2010mathematical}, respectively. All the other values are assumed or derived as explained in Section \ref{sec:param}.}
    \label{tab:par}
\end{table}

\section{Qualitative analysis} \label{sec:rep_num}

As a preliminary step, we establish the well-posedness of system~\eqref{eq: sys-am} and identify a bounded feasible region containing all biologically meaningful trajectories. This shows that the model generates a dynamical system on a positively invariant and attracting subset of $\mathbb{R}_+^{11}$, so that the qualitative analysis can be carried out by restricting to that region.

\begin{prop} \label{prop:invariance} The region
\begin{equation*}
\Omega=\left\{\mathbf{x}\in \mathbb{R}_+^{11}:\sum_{i=1}^{11}x_i\leq \frac{\Lambda}{\min\{\mu_1,\mu_2,\mu_3\}}\right\}
\end{equation*}
is positively invariant and globally attracting for system \eqref{eq: sys-am}--\eqref{eq: IC-am}.
\end{prop}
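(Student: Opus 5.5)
The plan is to argue in two stages: first non-negativity of solutions, then a linear differential inequality for the total population $N(t)=\sum_{i=1}^{11}x_i(t)$.

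\textbf{Local existence and non-negativity.} The vector field $\mathbf{F}$ is polynomial, hence locally Lipschitz on $\mathbb{R}^{11}$, so a unique local solution exists for every initial datum in \eqref{eq: IC-am}. For non-negativity I will use the standard quasi-positivity criterion: on each face $\{x_k=0\}\cap\mathbb{R}_+^{11}$ one has $f_k(\mathbf{x})\ge 0$.
\begin{itemize}
\item On $P=0$, $\dot P=p\Lambda+\psi S_{1e}\ge0$.
\item On $S_{1e}=0$, $\dot S_{1e}=(1-p)\Lambda\ge0$.
\item On $S_{1w}=0$, $\dot S_{1w}=q\omega P+\nu_1R_1\ge0$.
\item On $I_1=0$, $\dot I_1=(\sigma P+S_{1e}+S_{1w})\beta_{12}I_2\ge0$.
\item The remaining faces are checked analogously, since every negative term in $f_k$ carries the factor $x_k$.
\end{itemize}
Hence $\mathbb{R}_+^{11}$ is positively invariant; this follows, e.g., from Nagumo's theorem or from writing each equation as $\dot x_k\ge -a_k(t)x_k$ and integrating.

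\textbf{Bound on the total population.} Summing all eleven equations, the infection, recovery, waning, Nirsevimab administration and ageing terms cancel pairwise. This is the step to check carefully: the terms $\omega P$ split as $q\omega P+(1-q)\omega P$, the $\eta_1$-terms leave group~1 and enter group~2, and the $\eta_2$-terms leave group~2 and enter group~3. What remains is
\begin{equation*}
\dot N=\Lambda-\mu_1N_1-\mu_2N_2-\mu_3N_3-d_1I_1-d_3I_3 .
\end{equation*}
By non-negativity, setting $\underline{\mu}=\min\{\mu_1,\mu_2,\mu_3\}$, we obtain $\dot N\le \Lambda-\underline{\mu}\,N$. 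Gronwall's inequality then gives
\begin{equation*}
N(t)\le \frac{\Lambda}{\underline{\mu}}+\Bigl(N(0)-\frac{\Lambda}{\underline{\mu}}\Bigr)e^{-\underline{\mu}t}.
\end{equation*}

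\textbf{Conclusions.} The bound has three consequences.
\begin{itemize}
\item Solutions are bounded, so they cannot blow up in finite time and therefore exist for all $t\ge0$.
\item If $N(0)\le\Lambda/\underline{\mu}$, then $N(t)\le\Lambda/\underline{\mu}$ for all $t$, so $\Omega$ is positively invariant.
\item For arbitrary non-negative initial data, $\limsup_{t\to\infty}N(t)\le\Lambda/\underline{\mu}$ and the excess decays exponentially, so every trajectory approaches $\Omega$. This is the sense of global attractivity: the distance to $\Omega$ tends to zero.
\end{itemize}

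The main obstacle is purely bookkeeping: verifying the exact cancellation in $\dot N$, particularly the flows between age groups and the split of $\omega P$. The non-negativity step is routine.
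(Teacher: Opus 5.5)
Your proposal is correct and follows essentially the same route as the paper: first establish non-negativity, then sum the eleven equations to get $\dot N=\Lambda-\mu_1N_1-\mu_2N_2-\mu_3N_3-d_1I_1-d_3I_3\le\Lambda-\underline{\mu}\,N$, and compare with the linear equation $\dot y=\Lambda-\underline{\mu}\,y$. The only differences are minor. The paper obtains invariance of $\mathbb{R}_+^{11}$ by treating the open positive cone and then passing to its closure, whereas you check the quasi-positivity (Nagumo) condition face by face. You also make explicit the global existence of solutions, which the paper leaves implicit.
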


\begin{proof}
See Appendix A.
\end{proof}

System~\eqref{eq: sys-am} admits a disease-free equilibrium, denoted by
\begin{equation} \label{eq:DFE-am}
    E_0 = \bigl(P^0, S_{1e}^0, S_{1w}^0, 0, 0, S_2^0, 0, 0, S_3^0, 0, 0 \bigr),
\end{equation}
where the coordinates are given explicitly by
\begin{equation} \label{eq: coord DFE}
\begin{aligned}
    P^0 &= \frac{p \Lambda + \psi S_{1e}^0}{\omega + \mu_1}, \quad
    S_{1e}^0 = \frac{(1-p) \Lambda}{\mu_1 + \eta_1 +\psi}, \quad S_{1w}^0=\frac{q\omega P^0}{\mu_1+\eta_1} \\
    S_2^0 &= \frac{(1-q)\omega P^0+\eta_1(S_{1e}^0+S_{1w}^0)}{\mu_2 + \eta_2}, \quad
    S_3^0 = \frac{\eta_2 S_2^0}{\mu_3}.
\end{aligned}
\end{equation}

To assess the transmissibility of an infectious disease within a population, we introduce the \emph{basic reproduction number}, denoted by $\Rzero$, which represents the mean number of new infections caused by a single infectious individual over its infectious lifetime in a completely susceptible population, in the absence of any control measures or immunity~\cite{diekmann1990definition}.

When intervention strategies are incorporated into the model framework, the relevant threshold quantity is the \emph{control reproduction number}, indicated by $\Rc$~\cite{gumel2004modelling}. By definition, the presence of control measures reduces transmission, and thus $\Rc<\Rzero$.

These quantities play a fundamental role in determining the invasion dynamics of the disease: sustained transmission is possible only when the relevant reproduction number exceeds unity, whereas values below one correspond to disease elimination.

Following the standard next–generation matrix (NGM) approach, the reproduction numbers are derived from the linearisation of the infection subsystem around the disease-free equilibrium (DFE)~\cite{diekmann1990definition,van2002reproduction}. This framework allows the complex transmission processes across age groups to be summarised into a single threshold quantity that captures the average number of secondary infections generated in the population.

In the present model, three infectious compartments are considered, $I_1, I_2$ and $I_3$, corresponding to the three age classes. The $\mathbf{NGM}$ is therefore a $3\times3$ matrix, whose $(i,j)$ entries quantify the expected number of new infections in age class $i$ generated by an infectious individual in age class $j$, evaluated at the DFE. The control reproduction number, $\Rc$, is defined as the spectral radius (i.e., dominant eigenvalue) of this matrix. Formally, the next-generation matrix is defined as $\mathbf{NGM}=\mathbf{FV}^{-1}$, where $\mathbf{F}$ is the \emph{transmission matrix}, collecting the new--infections terms in each infectious compartment, and $\mathbf{V}$ is the \emph{transition matrix}, accounting for all other processes affecting infected individuals, such as recovery, ageing between age classes, and mortality. Both matrices are obtained as Jacobians of the corresponding terms and are evaluated at the disease-free equilibrium. 

As a result, the control reproduction number can be expressed as
\begin{equation*}\label{eq:Rc}
    \Rc= \max \{\Rc^{(1,2)}, \Rc^{(3)} \},
\end{equation*}
where $\Rc^{(1,2)}$ and $\Rc^{(3)}$ denote the contributions to the control reproduction number associated with transmission processes involving the first two and the third age class, respectively. These quantities are given by
\begin{equation*} \label{eq: Rc1 e Rc2}
    \Rc^{(1,2)} =\frac{1}{2} \left[ \frac{A_{11}}{\xi_1} + \frac{A_{12}\eta_1}{\xi_1\xi_2} + \frac{A_{22}}{\xi_2} + \sqrt{\left(\frac{A_{11}}{\xi_1} + \frac{A_{12}\eta_1}{\xi_1\xi_2} - \frac{A_{22}}{\xi_2}\right)^2 + 4\frac{A_{12}}{\xi_2} \left(\frac{A_{21}}{\xi_1} + \frac{A_{22}\eta_1}{\xi_1\xi_2}\right)}\;\right],
\end{equation*}
\begin{equation*}
    \Rc^{(3)} =\frac{\beta_{33}S_3^0}{\xi_3},
\end{equation*}
where, the coefficients $A_{ij}$ describe the effective transmission from age class $j$ to age class $i$ at the DFE and reflect contact rates, transmission parameters, and the distribution of susceptible individuals across age classes:
\begin{equation*} \label{eq: def Aij}
\begin{aligned}
    A_{11}&=\beta_{11}(\sigma P^0+S^0_{1e}+S^0_{1w}), & A_{12}&=\beta_{12}(\sigma P^0+S^0_{1e}+S^0_{1w}), \\ A_{21}&=\beta_{21}S^0_2, &
    A_{22}&=\beta_{22}S^0_2,\\  A_{31}&=\beta_{31}S_3^0, & A_{32}&=\beta_{32}S^0_3, &A_{33}=\beta_{33} S^0_3,
\end{aligned}
\end{equation*}
while $\xi_i$ represent the total rates at which individuals leave the corresponding infectious compartments due to recovery, ageing, natural mortality, or disease-induced mortality:
\begin{equation}   \label{eq: xi_i}
\xi_1=\mu_1+\gamma_1+d_1+\eta_1, \quad \xi_2=\mu_2+\gamma_2+\eta_2, \quad \xi_3=\mu_3+\gamma_3+d_3.
\end{equation}

The basic reproduction number $\Rzero$ is obtained from the same next--generation matrix construction by setting the intervention--related parameter $p=\psi=0$, thus removing infant protection.

In this case, the disease--free equilibrium reduces to
\[
\bar E_0 = \left(0,\bar S_{1e}^0,0,0,0,\bar S_2^0,0,0,\bar S_3^0,0,0\right),
\]
corresponding to a fully susceptible population.

The expression of $\Rzero$ has the same form as $\Rc$, with the coefficients $A_{ij}$ evaluated at $\bar E_0$ instead of $E_0$. The explicit expressions of $\bar S_{1e}^0$, $\bar S_2^0$, and $\bar S_3^0$ are reported in \autoref{sec: DFE}, where a detailed derivation of the next--generation matrix and a biological interpretation of the reproduction numbers are also provided.

The role of the control reproduction number $\Rc$ as a threshold parameter for system~\eqref{eq: sys-am} is characterised by two results corresponding to the cases $\Rc<1$ and $\Rc>1$. 

\begin{prop} \label{thm: LAS DFE}
    Suppose that $\Rc<1$. Then the disease-free equilibrium $E_0$ of system \eqref{eq: sys-am} is locally asymptotically stable.
\end{prop}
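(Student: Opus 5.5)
The plan is to invoke the standard next-generation matrix theorem of van den Driessche and Watmough~\cite{van2002reproduction}. By that theorem, $E_0$ is locally asymptotically stable whenever $\rho(\mathbf{F}\mathbf{V}^{-1})<1$, provided hypotheses (A1)--(A5) hold. So I will order the state vector as $\mathbf{x}=(I_1,I_2,I_3,P,S_{1e},S_{1w},R_1,S_2,R_2,S_3,R_3)$. I then split the infected-compartment equations into new-infection terms $\mathcal{F}_i$ and transfer terms $\mathcal{V}_i=\mathcal{V}_i^- - \mathcal{V}_i^+$. The new-infection terms are the mass-action incidences in~\eqref{eq: sys-am4}, \eqref{eq: sys-am7}, \eqref{eq: sys-am10}. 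The ageing inflows $\eta_1 I_1$ and $\eta_2 I_2$ go into $\mathcal{V}^+$. This choice is consistent with the formula for $\Rc$ already stated, where $\eta_1$ appears through $\mathbf{V}^{-1}$.

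First, I will verify (A1)--(A4). All terms are nonnegative on $\mathbb{R}_+^{11}$. A compartment receives no transfer outflow when it is empty. $\mathcal{F}_i=0$ for non-infected compartments. On the disease-free subspace $I_1=I_2=I_3=0$, both $\mathcal{F}$ and $\mathcal{V}^+$ vanish for the infected classes. All of these checks are immediate from~\eqref{eq: sys-am}. The Jacobians at $E_0$ are
\begin{equation*}
\mathbf{F}=\begin{pmatrix} A_{11} & A_{12} & 0\\ A_{21} & A_{22} & 0\\ A_{31} & A_{32} & A_{33}\end{pmatrix},\qquad
\mathbf{V}=\begin{pmatrix} \xi_1 & 0 & 0\\ -\eta_1 & \xi_2 & 0\\ 0 & -\eta_2 & \xi_3\end{pmatrix}.
\end{equation*}
Here $\mathbf{V}$ is a nonsingular M-matrix, since it is lower triangular with positive diagonal. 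Its inverse is therefore nonnegative, and $\mathbf{F}\mathbf{V}^{-1}$ is block lower triangular. Its spectral radius is the maximum of the spectral radii of the $2\times2$ upper-left block and the $(3,3)$ entry, which recovers $\Rc=\max\{\Rc^{(1,2)},\Rc^{(3)}\}$.

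The step needing real care is (A5). The Jacobian of the disease-free subsystem at $E_0$, with $I_j=0$, must have all eigenvalues with negative real part. In the disease-free subsystem, $R_1,R_2,R_3$ decay and feed the susceptibles, so the subsystem is linear. I will order its variables as $(P,S_{1e},S_{1w},R_1,S_2,R_2,S_3,R_3)$ and look for triangular or block-triangular structure. The $R_1$ equation is decoupled: it gives eigenvalue $-(\mu_1+\eta_1+\nu_1)$ and feeds only forward. The $S_{1e}$ equation is also self-contained, with eigenvalue $-(\mu_1+\eta_1+\psi)$, and it feeds $P$. Then $P$ has eigenvalue $-(\mu_1+\omega)$ and feeds $S_{1w}$ and $S_2$. $S_{1w}$ has eigenvalue $-(\mu_1+\eta_1)$.

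Each pair $(S_i,R_i)$ for $i=2,3$ forms a $2\times2$ block:
\begin{equation*}
\begin{pmatrix} -(\mu_2+\eta_2) & \nu_2\\ 0 & -(\mu_2+\eta_2+\nu_2)\end{pmatrix},\qquad
\begin{pmatrix} -\mu_3 & \nu_3\\ 0 & -(\mu_3+\nu_3)\end{pmatrix}.
\end{equation*}
Both blocks are triangular with negative diagonals. Since all couplings go from younger or earlier compartments to later ones, the full Jacobian is block triangular after this reordering. All its eigenvalues are therefore negative reals, and (A5) holds.

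With (A1)--(A5) established, Theorem~2 of~\cite{van2002reproduction} gives local asymptotic stability of $E_0$ when $\Rc<1$. The main obstacle is only the bookkeeping in (A5), in particular confirming that the $\nu_1 R_1$ and $\omega P$ inflows do not create cycles. They do not, since no flow returns from $S_2$ or $S_{1w}$ to $P$, $S_{1e}$, or $R_1$.
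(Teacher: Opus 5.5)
Your proof is correct, but it takes a different route from the paper. The paper never appeals to Theorem~2 of van den Driessche and Watmough. Instead, it writes out the full Jacobian $\mathbf{J}(E_0)$ and splits it into an $8\times 8$ block (groups 1--2) and the senior block $\mathbf{D}$, whose eigenvalues are $-\mu_3$, $A_{33}-\xi_3$ and $-(\mu_3+\nu_3)$. It then lists six explicit negative eigenvalues of the $8\times 8$ block and handles the remaining $(I_1,I_2)$ block by a trace/determinant (Routh--Hurwitz) check derived from $\Rc^{(1,2)}<1$.

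You replace that last step with the general equivalence $s(\mathbf{F}-\mathbf{V})<0\iff\rho(\mathbf{F}\mathbf{V}^{-1})<1$ built into the theorem, and pay for it by verifying (A1)--(A5). Your ordering ($R_1$ and $S_{1e}$ first, then $P$, $S_{1w}$, and the $(S_i,R_i)$ pairs) does make the disease-free block block-triangular. With $\mathcal{F}=0$, the full Jacobian is block triangular with diagonal blocks $-\mathbf{V}$ and that disease-free block, so your remark that $\mathbf{V}$ is a nonsingular M-matrix completes (A5).

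The abstract route is also more robust here. In the paper's Jacobian the $(I_2,I_1)$ entry is written $A_{21}$, whereas $\partial\dot I_2/\partial I_1=\eta_1+A_{21}$. The constant term of the quadratic is therefore really $c=(A_{11}-\xi_1)(A_{22}-\xi_2)-A_{12}(A_{21}+\eta_1)=\xi_1\xi_2(1-\lambda_+)(1-\lambda_-)$, where $\lambda_\pm$ are the real eigenvalues of the upper-left $2\times 2$ block of $\mathbf{F}\mathbf{V}^{-1}$. This is positive when $\Rc^{(1,2)}<1$, but the paper's inequality chain discards the $\eta_1$ terms and so does not establish it; your argument never needs this computation. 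Your route also gives instability for $\Rc>1$ at no extra cost, which the paper later asserts without proof for $\hat E$.

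What the direct computation buys in return is an explicit list of eigenvalues at $E_0$ without relying on an external theorem or on an admissible $\mathcal{F}/\mathcal{V}$ splitting.
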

\begin{proof}
    See \autoref{sec:LAS_DFE}.
\end{proof}

To analyse the case $\Rc>1$, we first introduce the state space $X$ and the biologically feasible region $X_0$, defined, respectively, as:
\begin{align*}
    X=\{ ( P, S_{1e},& S_{1w}, I_1, R_1,S_2,I_2,R_2,S_3,I_3,R_3)\in\R^{11} \colon \\ &P, S_{1e}, S_{1w}, S_2,S_3, I_i,R_i \geq 0,\, i=1,2,3 \},
\end{align*}
and
\begin{equation*}
    X_0=\{ ( P, S_{1e}, S_{1w}, I_1, R_1,S_2,I_2,R_2,S_3,I_3,R_3) \in X \colon I_i>0,\, i=1,2,3\}.
\end{equation*}

Note that $\Omega \subset X$, where $\Omega$ is the positively invariant and attractive set defined in Proposition~\ref{prop:invariance}. Hence, all trajectories of system~\eqref{eq: sys-am} eventually evolve within $X$.

We, then, define the boundary of the feasible region:
\begin{equation*}
\partial X_0= X \setminus X_0.
\end{equation*}

It is straightforward to verify that both $X$ and $X_0$ are positively invariant sets for system \eqref{eq: sys-am}. This allows us to introduce the following notion of uniform persistence~\cite{thieme1993persistence}.

\begin{definition}
    System \eqref{eq: sys-am} is said to be \emph{uniformly persistent} with respect to the pair $(X_0, \partial X_0)$ if there exists a constant $\delta>0$ such that for any solution of the system with initial conditions in $X_0$ the following holds: 
    \begin{equation*}
        \liminf_{t\to \infty} I_i(t)>\delta \quad i=1,2,3.
    \end{equation*}
\end{definition}

To prove uniform persistence of system~\eqref{eq: sys-am} with respect to the pair $(X_0,\partial X_0)$, we define the invariant boundary set (or \emph{disease--free invariant set})
\begin{align*} 
    M_{\partial} = \{ &(P(0),S_{1e}(0), S_{1w}(0), S_2(0), S_3(0), I_i(0), R_i(0)) \in \partial X_0 \colon\\  
    & \Phi_t (P(0), S_{1e}(0), S_{1w}(0), S_2(0), S_3(0), I_i(0), R_i(0)) \in \partial X_0, \forall t \geq 0,i=1,2,3\}
\end{align*}
where $\Phi_t : X \rightarrow X$ is the semiflow defined by system \eqref{eq: sys-am}. 

Uniform persistence when $\Rc>1$ is established in the following proposition.

\begin{prop} \label{prop:unif_persist} 
    If $\Rc>1$, then system \eqref{eq: sys-am} is uniformly persistent with respect to the pair $(X_0, \partial X_0)$.
\end{prop}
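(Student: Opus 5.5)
The plan is to use the standard acyclicity/persistence machinery of Thieme and Zhao (Theorem 4.6 in \cite{thieme1993persistence}, or Zhao's Theorem 1.3.1). By Proposition~\ref{prop:invariance} the semiflow $\Phi_t$ is point dissipative and has a compact global attractor in $\Omega$. The persistence claim then reduces to three facts: the disease-free invariant set $M_\partial$ has $E_0$ as the only $\omega$-limit point; $\{E_0\}$ is isolated and acyclic; and $W^s(E_0)\cap X_0=\emptyset$.

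\textbf{Step 1: $M_\partial$ and the boundary dynamics.} I would describe $M_\partial$ by exploiting the triangular structure of the infection subsystem, which holds because $\beta_{13}=\beta_{23}=0$.
\begin{itemize}
\item If $I_1(t)=I_2(t)=0$ for all $t$, the $(I_1,I_2)$-subsystem stays at zero.
\item If moreover $I_3\equiv 0$, then each $R_i\to 0$, and the linear susceptible subsystem converges to $(P^0,S_{1e}^0,S_{1w}^0,S_2^0,S_3^0)$ by the explicit cascade solution.
\item Conversely, if $I_1$ or $I_2$ is positive at some time, then by \eqref{eq: sys-am7} and \eqref{eq: sys-am10} we get $I_2,I_3>0$ afterwards. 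By \eqref{eq: sys-am4} with $\beta_{12}>0$ we also get $I_1>0$. So such an orbit enters $X_0$.
\end{itemize}
Orbits in $M_\partial$ therefore have either all $I_i\equiv0$, or $I_1=I_2\equiv 0$ with $I_3\not\equiv 0$ possible. In the first case they converge to $E_0$. The second case has to be handled separately; see Step~3.

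\textbf{Step 2: $W^s(E_0)\cap X_0=\emptyset$ via a Perron--Frobenius comparison.} Since $\Rc>1$, the matrix $\mathbf F-\mathbf V$ (with $\mathbf F,\mathbf V$ evaluated at $E_0$ as in \autoref{sec: DFE}) has spectral bound $s>0$. This follows from the equivalence $\Rc>1\iff s(\mathbf F-\mathbf V)>0$ in \cite{van2002reproduction}. Let $v\ge 0$ be a left eigenvector of $\mathbf F-\mathbf V$ for $s$; it exists because $\mathbf F-\mathbf V$ is Metzler.

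Now suppose, for contradiction, that a solution in $X_0$ satisfies $\|\Phi_t(x)-E_0\|<\epsilon$ for all $t\ge T$. Then
\[
\dot I \ge (\mathbf F-\mathbf V-\epsilon M)\,I
\]
for a fixed matrix $M$. For $\epsilon$ small the perturbed matrix has spectral bound $s_\epsilon>0$, so $\frac{d}{dt}(v_\epsilon\cdot I)\ge s_\epsilon\, v_\epsilon\cdot I$. Because $I_i(T)>0$ on $X_0$, we have $v_\epsilon\cdot I(T)>0$, hence $v_\epsilon\cdot I$ grows without bound. This contradicts the boundedness from Proposition~\ref{prop:invariance}. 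The same argument shows $E_0$ is isolated, and acyclicity is immediate from Step~1.

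\textbf{Step 3: from weak to uniform persistence for all three $I_i$ (main obstacle).} The Thieme--Zhao theorem gives uniform persistence of the function $\rho(x)=v\cdot I$. Turning this into $\liminf I_i>\delta$ for each $i$ is straightforward when $\Rc^{(1,2)}>1$. In that case $v$ can be chosen with $v_1,v_2>0$, so $I_1+I_2$ persists. Then \eqref{eq: sys-am7} and \eqref{eq: sys-am10} give $\dot I_2\ge \eta_1 I_1-\xi_2 I_2$ and $\dot I_3\ge \eta_2 I_2-\xi_3 I_3$, which pass persistence down the age chain. Finally \eqref{eq: sys-am4}, together with persistence of $I_2$ and of the infant susceptibles (these are bounded below via $\Lambda$), yields persistence of $I_1$.

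The delicate case is $\Rc^{(3)}>1\ge \Rc^{(1,2)}$. There $v$ is supported on $I_3$, and the boundary orbits with $I_1=I_2\equiv0$, $I_3>0$ from Step~1 (which may approach a senior-only endemic state) also have to be included as $\omega$-limit candidates in $M_\partial$. My first attempt would be to enlarge the Morse decomposition to $\{E_0,E_3\}$, where $E_3$ is the boundary equilibrium with only seniors infected, and to check whether $W^s(E_3)\cap X_0=\emptyset$ using the spectral bound of the $(I_1,I_2)$ block linearised at $E_3$.

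I expect this to be the genuine sticking point. At $E_3$ the $(I_1,I_2)$ block receives no forcing from $I_3$, and its spectral bound is governed by $\Rc^{(1,2)}$ computed with the susceptible levels at $E_3$. If that bound is negative, $I_1,I_2$ can decay, and the argument must either establish a further inequality or weaken the conclusion to persistence of $I_3$. I would therefore carefully check this subcase before asserting the full statement.
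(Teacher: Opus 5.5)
\textbf{Verdict.} Your proposal does not establish the statement as written, because it leaves the subcase $\Rc^{(3)}>1\ge\Rc^{(1,2)}$ open. That subcase cannot be closed, however: the proposition is false there, and the check you propose at the end of Step~3 settles it negatively.

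\textbf{Counterexample.} Equations \eqref{eq: sys-am1}--\eqref{eq: sys-am8} form a closed subsystem in $(P,S_{1e},S_{1w},I_1,R_1,S_2,I_2,R_2)$. Since $\beta_{13}=\beta_{23}=0$ and nothing flows from the senior class back to classes 1--2, no senior variable ever enters it. Its Jacobian at its disease-free point is the $8\times 8$ block $\mathbf{B}$ of \autoref{sec:LAS_DFE}, whose $(I_2,I_1)$ entry should read $A_{21}+\eta_1$. This block has six negative eigenvalues plus the spectrum of $K=\bigl(\begin{smallmatrix}A_{11}-\xi_1 & A_{12}\\ A_{21}+\eta_1 & A_{22}-\xi_2\end{smallmatrix}\bigr)$, and $s(K)<0$ if and only if $\Rc^{(1,2)}<1$. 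Now $\Rc^{(3)}=\beta_{33}S_3^0/\xi_3$ can be pushed above one by increasing $\beta_{33}$ without changing $\Rc^{(1,2)}$, so one can have $\Rc^{(1,2)}<1<\Rc^{(3)}=\Rc$. Then every solution starting in $X_0$ sufficiently close to $E_0$ has $(I_1,I_2)\to(0,0)$. Hence $\liminf_{t\to\infty}I_1(t)=0$, and uniform persistence with respect to $(X_0,\partial X_0)$ fails. The statement is correct under $\Rc^{(1,2)}>1$ (with $\beta_{12}>0$) and false when $\Rc^{(1,2)}<1<\Rc^{(3)}$. In the latter case, at most persistence of $I_3$ can be expected.

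\textbf{Comparison with the paper.} The paper argues via Zhao's Theorem~1.3.1 \cite{zhao2003dynamical}, and its comparison argument for $W^s(E_0)\cap X_0=\emptyset$ (Lemma~\ref{Lemma:E0_weakrep}) is essentially your Step~2. It misses the obstruction you found. Lemma~\ref{Lemma:M=Mpartial} claims $M_\partial$ is disease-free, arguing ``without loss of generality'' from $I_1(0)>0$. But orbits with $I_1(0)=I_2(0)=0<I_3(0)$ stay in $\partial X_0$ forever. The senior-only equilibrium $\hat E$ of \autoref{sec:senior-only} is such a boundary point whenever $\mathcal{P}_C=\Rc^{(3)}>1$. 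So your Step~1 is more accurate than the paper's lemma.

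\textbf{Fixes for the valid case $\Rc^{(1,2)}>1$.} Two points need tightening.

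First, your claim that acyclicity is immediate from Step~1 is unjustified. Step~1 left the boundary orbits with $I_3\not\equiv 0$ unresolved, and they may converge to $\hat E$. The simplest fix is to run the persistence argument on the closed 8-dimensional subsystem with interior $\{I_1>0,\,I_2>0\}$. There every boundary orbit converges to the disease-free point by the linear cascade, and $K$ is irreducible, so your Step~2 comparison applies. Persistence of $I_3$ then follows from $\dot I_3\ge \eta_2 I_2-\xi_3 I_3$.

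Second, your description of the left eigenvector is reversed. If $A_{33}-\xi_3>s(K)$, the left eigenvector of $\mathbf{F}-\mathbf{V}$ for its spectral bound is strictly positive; it is the right eigenvector that is supported on $I_3$. So persistence of $v\cdot I$ would not give persistence of $I_1+I_2$. Use instead $\rho=v_1I_1+v_2I_2$, with $(v_1,v_2)$ the left Perron vector of $K$. This works whichever block dominates, because $\dot I_1$ and $\dot I_2$ do not involve $I_3$.
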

    \begin{proof}
        To prove the Proposition, it is sufficient to verify that the hypotheses of Theorem 1.3.1 in \cite{zhao2003dynamical} are satisfied. Firstly, note that both $X$ and $X_0$ are positively invariant. Moreover, $\Phi_t$ is compact and point-dissipative. These properties guarantee the existence of a global attractor for $\Phi_t$ satisfying condition (C1) in \cite{zhao2003dynamical}. In order to verify the condition (C2), from the Lemma~\ref{Lemma:M=Mpartial} (see \autoref{sec:permanence}) we have that $M_{\partial}$ is the maximal compact invariant set of $\Phi_t$ in $\partial X_0$. Moreover, choosing Morse decomposition of $M_{\partial}$ as $\{E_0\}$, it means that $\{E_0\}$ is isolated. Lemma~\ref{Lemma:E0_weakrep} (see \autoref{sec:permanence}) implies $W^s(E_0) \cap X_0 = \emptyset$, hence the condition (C2) in \cite{zhao2003dynamical} is satisfied. Then, by the Theorem 1.3.1 in \cite{zhao2003dynamical}, with $L=X_0$, system~\eqref{eq: sys-am} is uniformly persistent with respect to $(X_0, \partial X_0)$. 
    \end{proof}

\begin{remark}
Uniform persistence immediately implies the existence of a stationary coexistence state. Indeed, by Theorem 1.3.7 in \cite{zhao2003dynamical}, the semiflow $\Phi_t$ admits at least one equilibrium in $X_0$. Consequently, whenever $\Rc>1$, system~\eqref{eq: sys-am} has at least one endemic equilibrium.
\end{remark}

For completeness, it is worth noting that the model also admits a theoretical possibility for the disease to circulate exclusively within the senior age group. This phenomenon, unlikely to occur in a real-world epidemiological context, is mathematically discussed in \autoref{sec:senior-only}.

\section{Parametrisation}\label{sec:param}

In this section, we describe the parametrisation adopted for model \eqref{eq: sys-am}. For each parameter, we summarise the rationale underlying the chosen baseline value and indicate the corresponding source or modelling assumption. The demographic component of the parametrisation is informed by official Italian data for 2024 \cite{demoincifre}, whereas the epidemiological and intervention-related parameters are taken from the clinical and epidemiological literature or introduced through explicit modelling assumptions. A complete list of parameter values and the simulation time horizon is reported in Table \ref{tab:par}. 

\subsection{Demographic parameters}
The simulations are carried out over a five-year period, allowing us to capture several consecutive epidemic seasons. We employ a daily time step, which yields a final simulation time of $t_f = 5$ years, corresponding to $5\cdot365 = 1{,}825$ days.

The demographic parameter values were all obtained from the Italian National Institute of Statistics (ISTAT) \cite{demoincifre}. According to the 2024 data, a total of $369{,}922$ live births were recorded, 
yielding a daily recruitment rate of $\Lambda = 369{,}922/365 \approx 1{,}013.48$ $\text{ind}\cdot\text{days}^{-1}$.  The annual mortality probabilities are 
$\mu_{1,\mathrm{ann}} = 2.57 \cdot 10^{-3}$ for infants, 
$\mu_{2,\mathrm{ann}} = 0.2 \cdot 10^{-2}$ for children--and--adults, 
and $\mu_{3,\mathrm{ann}} = 4.2 \cdot 10^{-2}$ for seniors. The corresponding daily mortality rates for each age class are computed using the standard probability-to-rate conversion formula:
\[
\mu_i = -\frac{\log\left( 1 - \mu_{i,\mathrm{ann}} \right)}{365}\;\text{days}^{-1}, \quad i=1,2,3.
\]
Thus, $\mu_1 \approx 7.05\cdot10^{-6}$ days$^{-1}$, $\mu_2 \approx 5.48\cdot10^{-6}$ days$^{-1}$, and $\mu_3 \approx 1.18\cdot10^{-4}$ days$^{-1}$.

The mean time spent in the first age class is one year (i.e. 365 days), so the ageing rate from the first to the second class is $\eta_1 = 1/365 = 0.27 \cdot 10^{-2}$ days$^{-1}$. The mean time spent in the second age class is $64$ years, and therefore the transition rate from the second to the third class 
is $\eta_2 = 1/(365 \cdot 64) \approx 4.28 \cdot 10^{-5}$ days$^{-1}$.

\subsection{Protection--related parameters}\label{subsec:prot_par}
The parameter values presented in this section are based on modelling assumptions. 
Let p denote the fraction of all infants born in a year who receive Nirsevimab at birth. In the Italian setting, Nirsevimab is administered at birth only during the epidemic season; hence
p depends both on the timing of births and on the coverage within the epidemic season.  This quantity can be expressed as 
\begin{equation}\label{eq p}
p=\pi c
\end{equation}
where $\pi$ is the fraction of births occurring during the epidemic season and $c$ denotes the fraction of infants receiving Nirsevimab at birth among those born during the epidemic season. Assuming that births are uniformly distributed throughout the year and that the epidemic season lasts six months, we have $\pi = 0.5$, and since $c\leq 1$, it follows that $p\leq 0.5$, which represents the theoretical upper bound for the protection-at-birth coverage. If the coverage is $95\%$, then $c = 0.95$, yielding $p = 0.475$.
Infants born outside the epidemic season are assumed to be eligible for antibody administration during the subsequent epidemic season. We denote by $\psi_{\mathrm{ann}}$ the annual probability that such infants receive Nirsevimab, and assume $\psi_{\mathrm{ann}}=0.8$, corresponding to an 80\% annual coverage among infants born outside the epidemic season. Using the standard probability-to-rate conversion formula, the associated daily protection rate is therefore defined as $\psi = -\log(1 - \psi_{\mathrm{ann}})/365$. For $\psi_{\mathrm{ann}}=0.8$, this yields $\psi = 0.44 \cdot 10^{-2}$ days$^{-1}$.

We also assume that Nirsevimab does not provide full protection and has an efficacy of 75\% \cite{efficacy_nirsevimab,menegale2025impact,shi2026respiratory}, 
so the ineffectiveness factor is 0.25. 
Protection is assumed to last six months (180 days), giving a waning rate of $\omega = 1/180 = 0.56 \cdot 10^{-2}$ days$^{-1}$.

When protection wanes, a fraction $q$ of infants are still under one year of age and consequently enter the susceptible compartment of the first age class, while the remaining fraction $1-q$ moves to the susceptible compartment of the second age class. In our numerical implementation, we set $q = 0.08$.

\subsection{Epidemiological parameters}
According to Li et al.~\cite{li2025relating}, the average duration of infection is 15 days in infants, 5 days in children--and--adults, and 8 days in seniors. Therefore, the recovery rate for each age class is $\gamma_1 = 1/15 = 0.667 \cdot 10^{-1}$ days$^{-1}$, $\gamma_2 = 1/5 = 0.2$ days$^{-1}$, and $\gamma_3 = 1/8 = 0.125$ days$^{-1}$, respectively.

Acedo et al.~\cite{acedo2010mathematical} report that the loss of disease-induced immunity is the same across all age classes and lasts 230 days, so the corresponding waning rate is $\nu_1 = \nu_2 = \nu_3 = 1/230 = 0.43 \cdot 10^{-2}$ days$^{-1}$.

We also account for disease-induced mortality in the first and third age classes. In the first class, we suppose an annual probability of death of $d_{1,\mathrm{ann}} = 0.01$, while the annual probability of death in the third class is 
assumed to be $d_{3,\mathrm{ann}} = 0.008$. Hence, the disease-induced mortality rate in these two age classes is computed as
\[
d_i = -\log\left(1 - d_{i,\mathrm{ann}}\right)/365\;\; \text{days}^{-1}, \qquad i = 1, 3,
\]
which yields to $d_1 \approx 2.75 \cdot 10^{-5}$ days$^{-1}$ and $d_3 \approx 2.2 \cdot 10^{-5}$ days$^{-1}$.

We also selected the transmission rate parameters to reproduce age-specific annual cumulative incidence levels consistent with estimates reported by the European Respiratory Virus Surveillance Summary (ERVISS)~\cite{erviss}; therefore, we set the values of $\beta_{ij}$, $i,j=1,2,3$, as shown in Table \ref{tab:par}.

\subsection{Initial conditions}\label{subsec:initial_conditions}

The initial conditions used in the numerical simulations are chosen close to the disease--free equilibrium (DFE), so as to represent the introduction of RSV into an otherwise disease--free population and to minimise transient dynamics arising from arbitrary starting states. Specifically, we perturb the DFE by introducing a small number of infectious individuals in each age group while preserving the total population size $N_{i}^0$ at the disease--free equilibrium within each group.

We set the initial numbers of infectious individuals to \(I_1(0)=30\), \(I_2(0)=50\), and \(I_3(0)=20\), and we initialise all recovered compartments at zero, \(R_i(0)=0\) for \(i=1,2,3\).

In the first age group, the infectious seed \(I_1(0)\) is redistributed across the non-infectious infant compartments proportionally to their effective susceptibilities at the disease-free equilibrium. To simplify notation, let
\[
\mathcal{S}_{1,\mathrm{eff}}^0:=\sigma P^0+S_{1e}^0+S_{1w}^0.
\]
This choice is consistent with the infection term in the model and preserves the total population size at the disease-free equilibrium within the first age group. Accordingly, we set
\[
P(0)=P^0-\frac{\sigma P^0}{\mathcal{S}_{1,\mathrm{eff}}^0}\,I_1(0),\quad 
S_{1e}(0)=S_{1e}^0-\frac{S_{1e}^0}{\mathcal{S}_{1,\mathrm{eff}}^0}\,I_1(0),
\quad
S_{1w}(0)=S_{1w}^0-\frac{S_{1w}^0}{\mathcal{S}_{1,\mathrm{eff}}^0}\,I_1(0).
\]
For the remaining age groups, the seed is taken from the susceptible class, giving
\[
S_2(0)=S_{2}^0-I_{2}(0),\qquad
S_3(0)=S_{3}^0-I_{3}(0).
\]

\section{Model outcomes}\label{sec:num_res}

In this section we present the numerical results obtained from the model analysis, focusing on the temporal evolution of cumulative incidence and on the impact of different preventive strategies based on Nirsevimab administered to infants. All results are shown over a five-year time horizon. Cumulative incidences by age group are displayed by means of histograms, where each bar represents the cumulative number of new infections recorded within a year, while contour plots are used to illustrate the dependence of selected outcomes on protection-at-birth and catch-up coverage levels. Different colours correspond to the three age classes considered in the model.

Cumulative incidences are computed as time integrals of the corresponding incidence terms over each yearly interval. Specifically, the annual cumulative incidence in the $i$-th age class during year $Y$, denoted by $\text{CI}_i(Y)$, is defined as the time integral of the inflow into the corresponding infected compartment. For the three age classes considered in the model, this yields~\cite{buonomo2020effects}:
\begin{align*}
\text{CI}_1(Y) &= \int_{Y-1}^{Y} \big(\sigma P(t) + S_{1e}(t) + S_{1w}(t)\big)\,\text{FoI}_1(t)\,\mathrm{d}t, \\
\text{CI}_i(Y) &= \int_{Y-1}^{Y} S_i(t)\,\text{FoI}_i(t)\,\mathrm{d}t, \quad i=2,3,
\end{align*}
where $\text{FoI}_i(t)$, $i=1,2,3$, denotes the force of infection acting on age class $i$, defined by~\eqref{eq: FoI}.

\subsection{Exploring the 2024 Italian RSV outbreak}

\begin{figure}[t!]
    \centering
    \begin{subfigure}{\textwidth}
        \centering
        \includegraphics[width=0.75\textwidth]{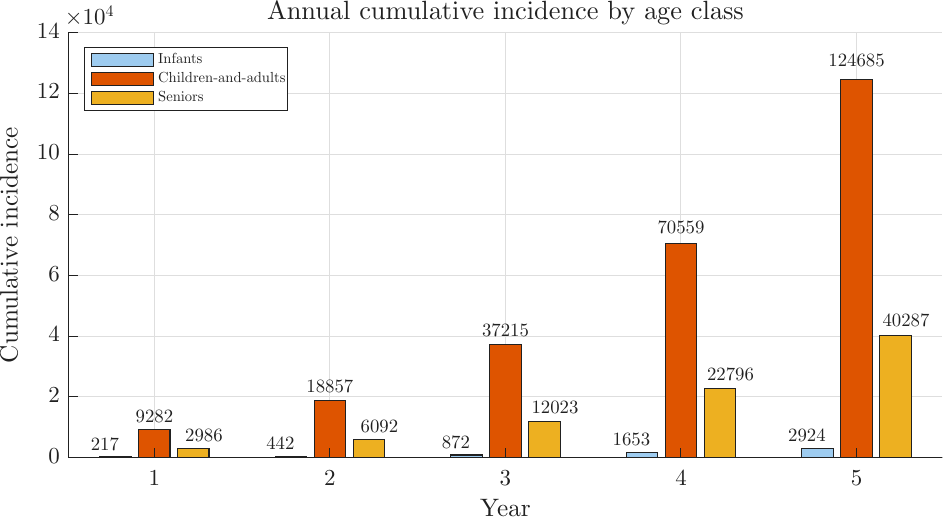}
        \caption{Incidence}
        \label{fig:baseline_inc}
    \end{subfigure}

    \vspace{0.3cm}

    \begin{subfigure}{\textwidth}
        \centering
        \includegraphics[width=0.75\textwidth]{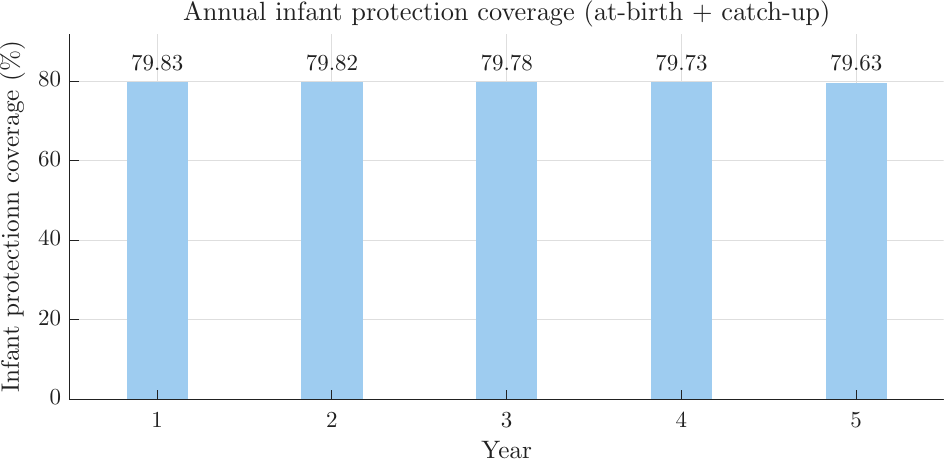}
        \caption{Protection through Nirsevimab coverage}
        \label{fig:baseline_cov}
    \end{subfigure}

     \caption{(a) Annual cumulative number of new RSV cases over a five-year horizon, stratified by age group. The blue bars report infants (cumulative incidence of \(I_1(t)\)), the orange bars report children--and--adults (cumulative incidence of \(I_2(t)\)), and the yellow bars report seniors (cumulative incidence of \(I_3(t)\)). (b) Annual infant coverage (percentage of infants receiving Nirsevimab) over a five-year horizon. Parameter values correspond to the Italian 2024 baseline scenario reported in Table~\ref{tab:par}.}
\end{figure}

We first analyse the model outcomes obtained using parameter values reported in \autoref{tab:par}, representative of the Italian epidemiological situation in 2024–2025, which we refer to as the baseline scenario. Figure \ref{fig:baseline_inc} reports the annual cumulative incidence for each age class over five consecutive years. A clear increasing trend is observed, with the largest contribution arising from the adults population, while infants account for a smaller but epidemiologically relevant fraction of total cases. The magnitude of infant infections is consistent with epidemiological estimates reported by Dovizio et al.~\cite{dovizio2024clinical}. Moreover, these estimates are coherent with surveillance data from the European Respiratory Virus Surveillance Summary (ERVISS)~\cite{erviss}. The progressive growth of cumulative cases reflects the persistence of viral circulation in the population.

Still within the baseline scenario, we report the temporal evolution of infant antibody coverage. We define the annual number of infants receiving Nirsevimab during year $Y$, denoted by $\text{AC}(Y)$, as the time integral of the inflow into the protected compartment $P$. In the model, this inflow is given by the sum of newborns protected at birth, $p\Lambda$, and the rate of subsequent protection of susceptible infants, $\psi S_{1e}$. Therefore,
\begin{equation*}
    \text{AC}(Y) = \int_{Y-1}^{Y} \big(p\Lambda + \psi S_{1e}(t) \big)\,\mathrm{d}t.
\end{equation*}

The histogram in Figure~\ref{fig:baseline_cov} shows a stable pattern over the five-year horizon. This behaviour reflects the near-stationary balance between demographic turnover and immunisation processes in the infant population under the baseline assumptions, which leads to minimal variation over the five-year period.

\subsection{Impact of infant protection coverages on RSV transmission}
\label{5.2}
\begin{figure}[t!]
    \centering
	\subfloat[]{
	\includegraphics[width=0.48\linewidth]{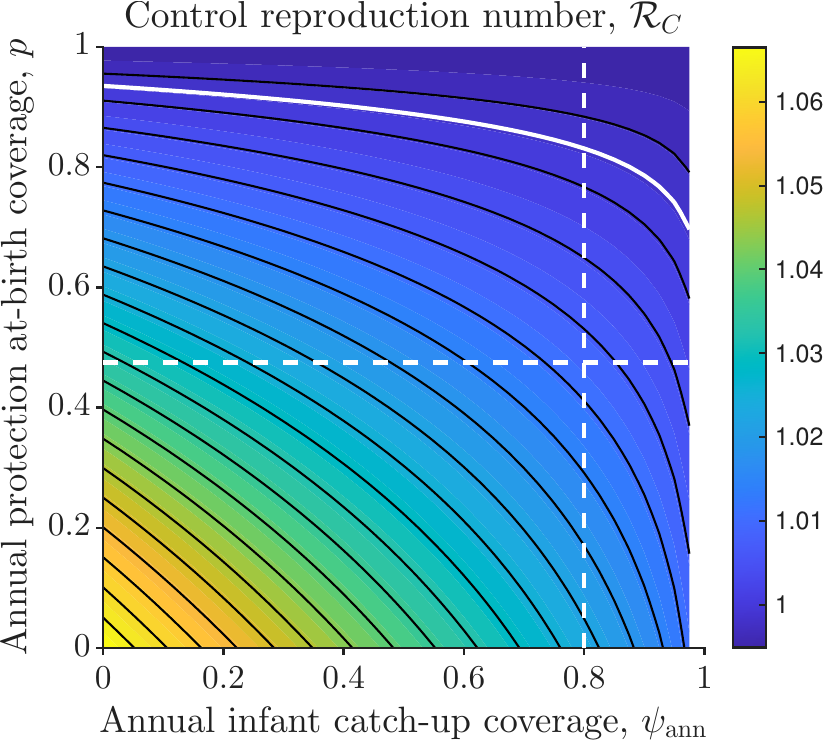}
    \label{fig:contour_Rc}
	}
	\subfloat[]{	
    \includegraphics[width=0.48\linewidth]{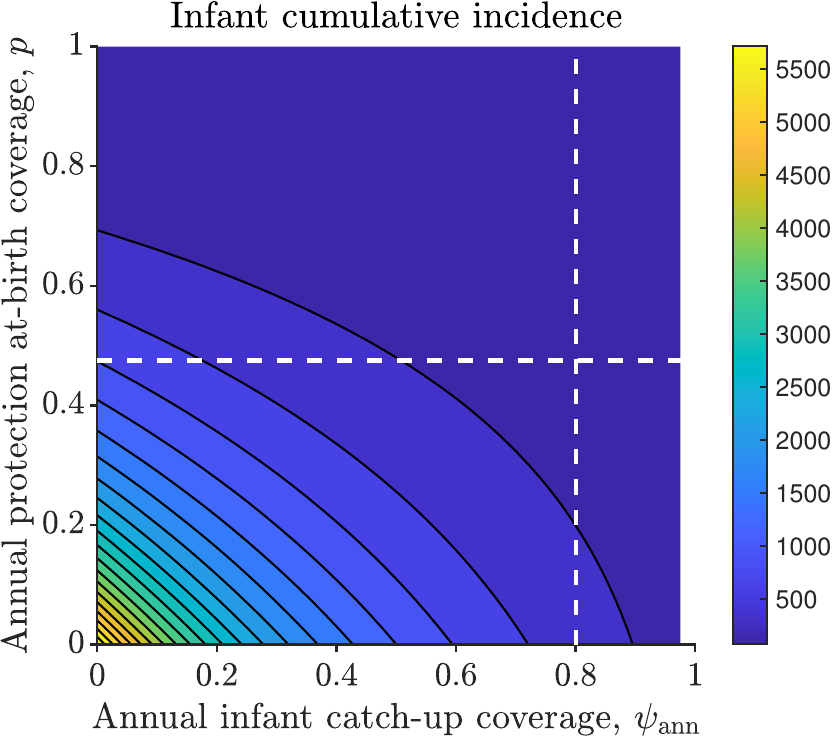}
    \label{fig:contour_CI}
    }
    \caption{(a) Contours of the control reproduction number $R_C$ as a function of infant annual protection at birth coverage, $p$, and infant annual catch-up coverage, $\psi_{\mathrm{ann}}$. (b) Contours of infant cumulative incidence during the first year of the five-year time horizon, as a function of $p$ and $\psi_{\mathrm{ann}}$. The intersection between the dotted white lines indicates the baseline values, namely $p=0.475$, $\psi_{\mathrm{ann}}=0.8$. Note that, although
    p could theoretically reach 1, the assumptions of Section~\ref{subsec:prot_par}, together with formula \eqref{eq p}, imply $p \leq 0.5$. All the other parameter values are set as in Table~\ref{tab:par}.}
    \label{fig:contour_plots}
\end{figure}

To quantify the epidemiological impact of infant protection strategies, we analyse how varying coverage at birth, $p$, and catch-up coverage, $\psi_{\mathrm{ann}}$, influences the control reproduction number $\Rc$ and infants cumulative incidence. All other parameters are held constant at their baseline values for Italy in 2024, as reported in \autoref{tab:par}. We recall that $\psi_{\mathrm{ann}}$ refers to the annual infant catch-up coverage (see Section~\ref{subsec:prot_par}).

The dependence of $\Rc$ on infant protection coverage is illustrated through the contour plot in Figure~\ref{fig:contour_Rc}. It indicates that there is a critical threshold at $ \Rc = 1 $; however, for the Italian scenario considered, this threshold cannot be reached since $ p < 0.5 $. Therefore, although variations in coverage affect the value of $\Rc$, the reduction remains limited and does not allow the critical threshold to be crossed.

A different picture emerges when analysing the cumulative incidence among infants, shown in Figure~\ref{fig:contour_CI}, where incidence is reported as a function of both protection at-birth coverage, $p$, and infant catch-up coverage, $\psi_{\mathrm{ann}}$. Increasing either component of protection leads to a substantial reduction in the annual number of cases. For low coverage values, infant infections are of the order of 5{,}000 cases per year, while higher combined coverage levels reduce the cumulative incidence to approximately 200 cases. The contour plot highlights that, for low values of both $p$ and $\psi_{\mathrm{ann}}$, cumulative infant infections decrease more steeply and approximately linearly as either parameter increases, with comparable contributions from the two strategies. At higher coverage levels, the relationship becomes non-linear, and increases in one parameter progressively reduce the additional impact of the other on cumulative incidence.

\subsection{Comparing different preventive scenarios}

\begin{figure}
    \centering
    \includegraphics[width=0.75\textwidth]{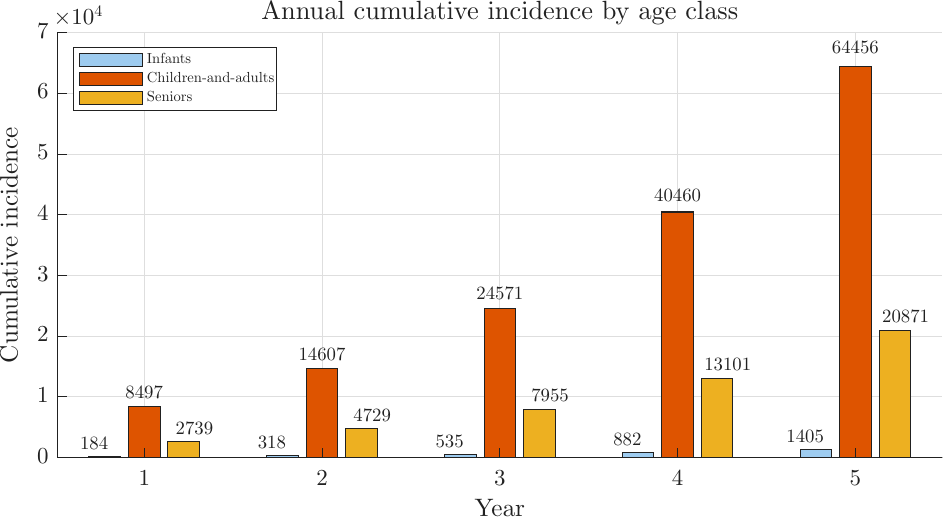}
    \caption{Annual cumulative RSV infections over a five-year horizon under the \emph{upward scenario} (UWS), in which annual infant catch-up coverage, $\psi_{\mathrm{ann}}$, is increased by 10\% relative to the baseline scenario. Each bar corresponds to one year and reports cumulative incidence by age group: infants (blue, $I_1(t)$), children--and--adults (orange, $I_2(t)$), and seniors (yellow, $I_3(t)$). All other parameters remain as in the Italian 2024 baseline scenario (see \autoref{tab:par}).}
    \label{fig:incr_cover}
\end{figure}

\begin{figure}[t!]
    \centering
\includegraphics[width=0.75\textwidth]{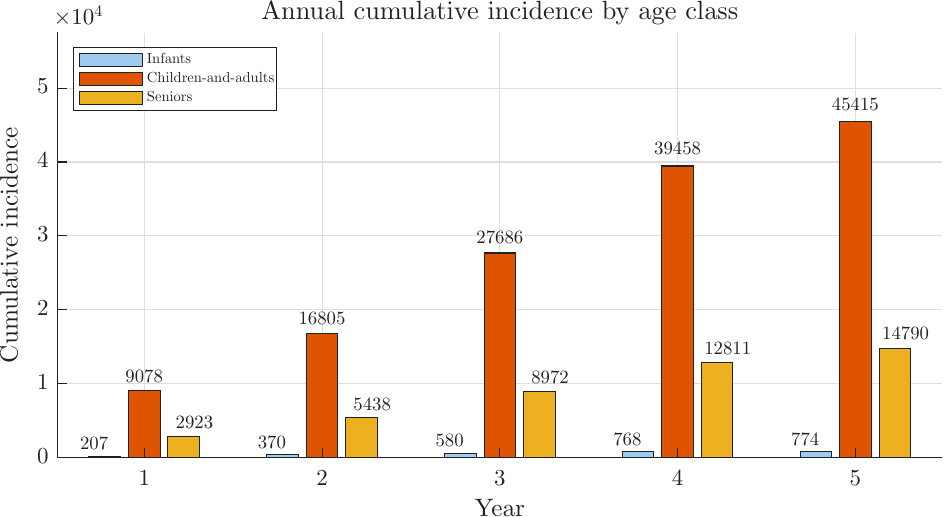}
    \caption{Annual cumulative RSV infections over a five-year horizon under the \emph{annual increase scenario} (AIS), in which annual infant catch-up coverage, $\psi_{\mathrm{ann}}$, is increased by 5\% in the first year relative to baseline and subsequently by 5\% each year relative to the previous year. Each bar corresponds to one year and reports cumulative incidence by age group: infants (blue, $I_1(t)$), children--and--adults (orange, $I_2(t)$), and seniors (yellow, $I_3(t)$). All other parameters remain as in the Italian 2024 baseline scenario (see \autoref{tab:par}).}
    \label{fig:progressive_incr}
\end{figure}

Here, we analyse alternative preventive scenarios and evaluate their impact on cumulative incidence across age groups.

As a first case, we examine the \emph{upward scenario} (UWS), in which infant catch-up coverage is increased by 10\% over the baseline value, while all other model parameters remain unchanged. The corresponding histograms in Figure~\ref{fig:incr_cover} show that, although the overall trend remains increasing over time, the total number of cases is significantly reduced. In particular, after five years, infant cases decrease from 2{,}924 in the baseline scenario to 1{,}405 under the upward scenario. A reduction is also observed in the other age classes, reflecting indirect protection arising from fewer infectious infants who can transmit the virus to seniors.

We then consider a progressive increase in protection coverage for infants born outside the epidemic season, with annual infant catch-up coverage, $\psi_{\mathrm{ann}}$, increasing by 5\% in the first year relative to baseline and subsequently by 5\% each year relative to the previous year. Formally, the annual catch-up coverage evolves according to
\begin{equation*}
    \psi_{\mathrm{ann}}(Y) = 1.05\; \psi_{\mathrm{ann}}(Y-1), \quad Y=1,\dots,5,
\end{equation*}
where $Y$ denotes the simulation year and $\psi_{\mathrm{ann}}(0) = 0.80$ represents the baseline value. This case, which we refer to as the \emph{annual increase scenario} (AIS), is shown in Figure~\ref{fig:progressive_incr}. A substantial reduction in infant cases is observed with respect to the baseline scenario, with corresponding decreases in the other age classes over the five-year horizon, consistently with the indirect protection mechanism described for UWS.

A comparison between the UWS and AIS scenarios reveals a notable temporal pattern. During the first three years, UWS yields fewer cases than AIS across all age groups, reflecting the fact that catch-up coverage, $\psi_{\mathrm{ann}}$, under AIS starts lower than the constant value under UWS. 
From the fourth year onward, the gradually increasing $\psi_{\mathrm{ann}}$ under AIS exceeds the constant UWS value, yet the number of cases remains initially higher in all age groups. This highlights a delayed effect in the AIS scenario, as the improvement relative to UWS emerges consistently from year four onward in every age group.

To further quantify the differences between UWS and AIS, we compute the year-by-year relative variation in cumulative incidence for both UWS and AIS with respect to the baseline scenario, by defining the index
\begin{equation}\label{eq:RX}
    \mathrm{R}X = \frac{X - X_0}{X_0} \cdot 100\%.
\end{equation}

This quantity measures the percentage relative change of $X$, for $X \in \{CI_i(Y) \colon i=1,2,3,\, Y=1,\dots,5\}$, compared with the corresponding value $X_0$ given by model~\eqref{eq: sys-am} in the Italian 2024 baseline scenario. This index provides a numerical quantification of the patterns already observed in the cumulative incidence histograms, allowing a clear comparison of the temporal effects of each scenario across age groups. The results, summarised in \autoref{tab:rx_all}, confirm the temporal patterns observed in the cumulative incidence histograms, with UWS producing a larger reduction than AIS during the first three years, and AIS showing a delayed effect that becomes apparent in the fourth and fifth years.

\begin{table}
\centering
\caption{Year-by-year relative variation in cumulative incidence (CI) for each age group compared with the Italian 2024 baseline scenario, as quantified by the index $\mathrm{R}X$, defined by Eq.~\eqref{eq:RX}. Results are shown for the baseline scenario ($\psi_{\mathrm{ann}}=0.80$) and for two alternative settings. In the \emph{upward scenario} (UWS), $\psi_{\mathrm{ann}}$ is increased by 10\% relative to baseline (i.e., $\psi_{\mathrm{ann}}=0.88$). In the \emph{annual increase scenario} (AIS), $\psi_{\mathrm{ann}}(Y)=1.05\,\psi_{\mathrm{ann}}(Y-1)$ for $Y=1,\dots,5$, where $Y$ denotes the simulation year and $\psi_{\mathrm{ann}}(0)=0.80$ denotes the baseline value. }
\label{tab:rx_all}
\setlength{\tabcolsep}{6pt}
\renewcommand{\arraystretch}{1.05}
\begin{tabular}{lrrrrr}
\toprule
\textbf{Infants} & {$\text{RCI}_1(1)$} & {$\text{RCI}_1(2)$} & {$\text{RCI}_1(3)$} & {$\text{RCI}_1(4)$} & {$\text{RCI}_1(5)$} \\
\midrule
Baseline scenario, $\psi_{\text{ann}}=0.80$ & 0 & 0 & 0 & 0 & 0 \\
Upward scenario, $\psi_{\text{ann}}=0.88$ & -15.2 & -28.1 & -38.6 & -46.6 & -52.0 \\
Annual increase scenario  & -4.6 & -16.3 & -33.5 & -53.5 & -73.5 \\
\midrule
\midrule

\textbf{Children-and-Adults} & {$\text{RCI}_2(1)$} & {$\text{RCI}_2(2)$} & {$\text{RCI}_2(3)$} & {$\text{RCI}_2(4)$} & {$\text{RCI}_2(5)$} \\
\midrule
Baseline scenario, $\psi_{\mathrm{ann}}=0.8$ & 0 & 0 & 0 & 0 & 0 \\
Upward scenario, $\psi_{\mathrm{ann}}=0.88$ & -8.5 & -22.5 & -34.0 & -42.7 & -48.3 \\
Annual increase scenario & -2.2 & -10.9 & -25.6 & -44.1 & -63.6 \\
\midrule
\midrule

\textbf{Seniors} & {$\text{RCI}_3(1)$} & {$\text{RCI}_3(2)$} & {$\text{RCI}_3(3)$} & {$\text{RCI}_3(4)$} & {$\text{RCI}_3(5)$} \\
\midrule
Baseline scenario, $\psi_{\mathrm{ann}}=0.8$ & 0 & 0 & 0 & 0 & 0 \\
Upward scenario, $\psi_{\mathrm{ann}}=0.88$ & -8.2 & -22.4 & -33.8 & -42.5 & -48.2 \\
Annual increase scenario & -2.1 & -10.7 & -25.3 & -43.8 & -63.3 \\
\bottomrule
\end{tabular}
\end{table}

Finally, we consider a scenario in which both protection at birth, $p$, and catch-up coverage, $\psi_{\mathrm{ann}}$, are reduced by 5\% relative to their baseline values. The resulting histograms in Figure~\ref{fig:decr_cover} show a systematic increase in cumulative incidence across all age groups. In infants, the relative increase ranges from approximately +7\% in the first year to +36\% in the fifth year. The other two age classes exhibit a similar temporal pattern, with an increase of about +4\% in the first year that rises to +32\% by the fifth year. These results further emphasise the role of infant protection in shaping transmission dynamics: even moderate reductions in coverage translate into progressively larger increases in incidence over time, not only among infants but also in the broader population.

\begin{figure}[t!]
    \centering
    \includegraphics[width=0.75\textwidth]{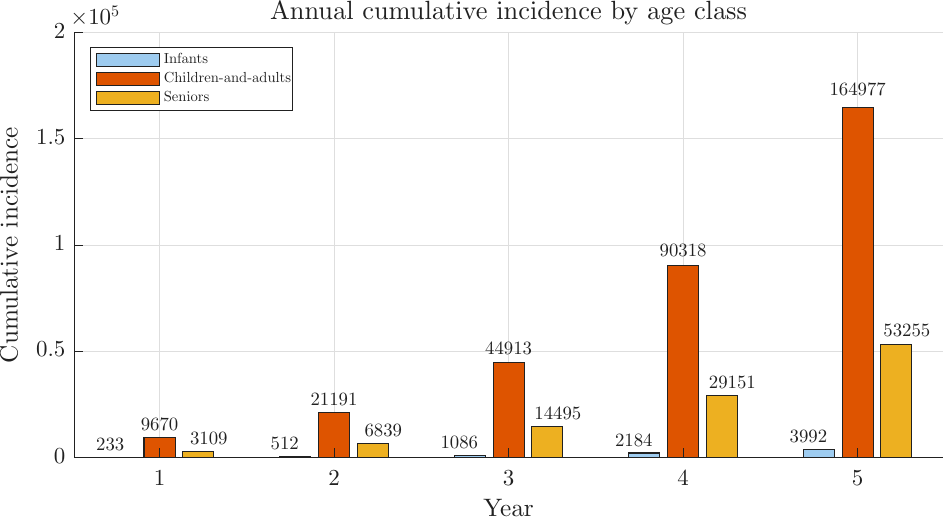}
    \caption{Annual cumulative RSV infections over a five-year horizon under a scenario in which both annual infant protection at birth coverage, $p$, and annual catch-up coverage, $\psi_{\mathrm{ann}}$, are decreased by 5\% relative to baseline. Each bar corresponds to one year and reports cumulative incidence by age group: infants (blue, $I_1(t)$), children--and--adults (orange, $I_2(t)$), and seniors (yellow, $I_3(t)$). All other parameters remain as in the Italian 2024 baseline scenario (see \autoref{tab:par}).}
    \label{fig:decr_cover}
\end{figure}

\section{Conclusions and discussion}\label{sec:conclusion}
In this work, we propose a stage-structured mathematical model for RSV transmission, stratified by age classes and including infant protection via Nirsevimab, a monoclonal antibody administered at birth to infants born during the epidemic season and, through a catch-up strategy in the subsequent epidemic season, to those born outside it, in line with the approach adopted in the Italian setting. We derive analytical expressions for the basic and control reproduction numbers and establish key qualitative properties of the model. We then use numerical analyses to investigate the temporal evolution of RSV incidence and to evaluate the epidemiological impact of alternative infant protection strategies under parameter values representative of the Italian epidemiological context.

The numerical results provide insight into the effects of infant protection strategies, as summarised below:
\begin{itemize}
    \item Infant protection substantially reduces RSV incidence among infants, with marked decreases in cumulative cases observed under increased coverage levels, supporting the effectiveness of Nirsevimab in reducing disease burden in this age class;
    \item As highlighted in Section \ref{5.2}, although infant-targeted protection alone does not reduce $\Rc$ below the epidemic threshold it effectively lowers RSV incidence among infants and provides indirect benefits across age groups;
    \item Scenario analyses show that both the \emph{upward scenario} and the \emph{annual increase scenario} effectively reduce RSV incidence across all age groups. The upward scenario provides immediate reductions, whereas the annual increase scenario produces slightly smaller early effects but greater reductions in later years, as coverage continues to rise. This delayed improvement highlights the potential of progressive strategies to maximize long-term epidemiological benefits.
\end{itemize}
It is worth noting that the European landscape of RSV prevention is heterogeneous and continuously evolving, reflecting differences in the timing of Nirsevimab introduction and in country-specific implementation strategies. Emerging evidence suggesting extended efficacy of Nirsevimab up to at least six months may further influence this scenario \cite{munro2025180}. Furthermore, the parallel introduction of maternal vaccination--already incorporated into national prevention strategies in some European countries--and, in a minority of settings, the adoption of combined preventive approaches, are likely to further impact RSV incidence, transmission dynamics, and clinical burden in childhood. This variability in programmatic approaches, drug availability, and reimbursement policies complicates cross-country comparisons of impact and effectiveness, while at the same time providing a valuable natural framework to assess real-world implementation and the relative effectiveness of different strategies.

An important observation in this regard concerns the structure of current immunisation programmes: all European countries implementing Nirsevimab prophylaxis adopt a seasonal approach \cite{franck2026evolving}, yet not all include a catch-up strategy for infants born outside the epidemic season. Even among those that do, eligibility criteria vary substantially, with upper age limits ranging from 2 to 12 months (and up to 24 months in selected high-risk conditions). In Italy, where an upper age limit of 9 months is generally applied, regional variability has allowed extensions up to 12 months in regions such as Lombardy, Sicily, Piedmont, and Veneto.

In this context, the results of the proposed framework offer a relevant interpretive key. According to the analysis carried out, cumulative RSV incidence is primarily associated with the level of annual infant catch-up coverage, provided that prevention strategies achieve at least 50--60$\%$ coverage of the susceptible population at birth. This finding may help explain the marked variability observed in Italy during the first season of Nirsevimab implementation, where immunization at birth was generally appropriately high (often exceeding 90 \%), while the capacity to reach infants born outside the epidemic season varied substantially across regions, resulting in significant heterogeneity in overall coverage and, consequently, in preventive effectiveness. Moreover, the proposed framework suggests that the lowest RSV incidence is achieved when at-birth coverage reaches approximately 70 \% of the annual birth cohort. This points to the potential of a continuous immunization strategy targeting all newborns--regardless of month of birth--to effectively reduce viral circulation at the population level, largely independently of the capacity to implement catch-up immunization, while also simplifying delivery pathways and reducing overall programmatic costs.

Our study has some limitations that also suggest directions for future work. In particular: (i) the lack of complete age-specific epidemiological and immunisation data limits the possibility of calibrating the model to observed outcomes. While our analysis is designed to explore and compare different Nirsevimab administration strategies rather than to perform data fitting, more detailed information on infection rates, age-specific susceptibility, and antibody coverage would allow future studies to further refine parameter estimates and strengthen quantitative assessments. (ii) The study is limited to the Italian context. Specific quantitative outcomes may not be directly generalizable to other countries, although the qualitative insights and relative comparison of protection strategies are expected to remain broadly informative. (iii) The model is intentionally restricted to infant prophylaxis with Nirsevimab and does not include RSV vaccination in older adults. This choice allows us to isolate the effect of infant protection strategies, but it prevents the assessment of the direct and indirect contribution that vaccination of older adults may provide.

Future extensions could incorporate vaccination strategies for older adults, using a scenario-based approach analogous to that adopted here for infant prophylaxis, in order to assess their direct impact on the senior population and their possible indirect effects on overall RSV transmission. Similarly, in the case of maternal vaccination, it would be valuable to compare its protective effect with that of monoclonal antibodies to identify the most effective strategy for infant protection. Additionally, more detailed and consistent surveillance data would allow refinement of age-specific contact patterns, improving quantitative estimates while preserving the public health relevance of the proposed framework.

\paragraph*{Acknowledgements.} This work has been carried out under the auspices of the Italian National Group for Mathematical Physics (GNFM) of the National Institute for Advanced Mathematics (INdAM).  The research was supported by EU funding within the NextGenerationEU---MUR PNRR Extended Partnership initiative on Emerging Infectious Diseases (Project no. PE00000007, INF-ACT).

\paragraph*{Data availability statement.} Data sharing is not applicable to this article as no datasets were generated or
analysed during the current study.

\section*{Compliance with ethical standards}

\paragraph*{Conflict of interest.}
The authors state that there is no conflict of interest.

\appendix 
\section*{Appendix} 
\renewcommand{\thesection}{A.\arabic{section}}
\setcounter{section}{0}

\section{Proof of Proposition \ref{prop:invariance}}
By standard procedure (see e.g.~\cite{buonomo2025modelling}), one can derive the positive invariance of the strictly positive cone of $\R^{11}$. Since the closure of a positively invariant set is still positively invariant \cite[Remark 16.3h]{amann2011ordinary}, it follows that also the nonnegative cone $\R_+^{11}$ is positively invariant.

Adding the equations of model \eqref{eq: sys-am}, we obtain the following equation ruling the evolution of the total population $N(t)$:
\begin{align*}
    \dot N = \Lambda - \mu_1 N_1 - \mu_2 N_2 - \mu_3 N_3 -d_1 I_1 - d_3 I_3\\
    \leq \Lambda - \mu_1 N_1 - \mu_2 N_2 - \mu_3 N_3 \\
    \leq \Lambda - \min\{\mu_1, \mu_2, \mu_3\}N.
\end{align*}
Denoting as $\mu = \min\{\mu_1, \mu_2, \mu_3\}$, the comparison with the linear equation $\dot y=\Lambda-\mu y$ yields
\begin{equation*}
N(t)\le N(0)e^{-\mu t}+\frac{\Lambda}{\mu}\left(1-e^{-\mu t}\right),\qquad t\ge 0,
\end{equation*}
hence
\begin{equation*}
\limsup_{t\to\infty}N(t)\le \frac{\Lambda}{\mu}
\quad\text{and}\quad
N(t)\le \max\!\left\{N(0),\,\frac{\Lambda}{\mu}\right\},\quad t \geq 0.
\end{equation*}

\section{The \emph{next--generation} matrix method}
\label{sec: DFE}
To compute the control reproduction number, consider the next--generation matrix approach. We denote by $\mathcal{F}$ the vector of new infection rates into the infected compartments:
\begin{equation*}
    \mathcal{F}(\mathbf{x})= \begin{bmatrix}
			(\sigma P+S_{1e}+S_{1w})(\beta_{11}I_1 + \beta_{12}I_2) \\ S_2(\beta_{21}I_1 + \beta_{22}I_2) \\ S_3(\beta_{31}I_1+\beta_{32}I_2 + \beta_{33}I_3) \\
		\end{bmatrix}
\end{equation*}
and denote by $\mathcal{V}$ the vector of inflows and outflows from the infected compartments due to all processes other than new infections, namely recovery, disease-induced death, age transition, and natural death:
\begin{equation*}
\mathcal{V}(\mathbf{x}) = 
    \begin{bmatrix}
        \xi_1 I_1 \\
        \xi_2 I_2 - \eta_1I_1 \\
        \xi_3 I_3 - \eta_2I_2\\
    \end{bmatrix},
\end{equation*}
where the parameters $\xi_i$, $i=1,2,3$, are defined in \eqref{eq: xi_i}. The Jacobian of $\mathcal{F}$ and $\mathcal{V}$ evaluated at the disease--free equilibrium are the so--called \emph{transmission} matrix and the \emph{transition} matrix, respectively \cite{diekmann2010construction,van2002reproduction}:

\begin{equation*}
 \mathbf{F}=\begin{bmatrix}
        A_{11} & A_{12} & 0\\
        A_{21} & A_{22} & 0\\
        A_{31} & A_{32} & A_{33}
    \end{bmatrix},
    \qquad
   \mathbf{V}=\begin{bmatrix}
        \xi_1 & 0  &  0\\
        -\eta_1 & \xi_2 & 0\\
        0 & -\eta_2 & \xi_3  
    \end{bmatrix},
\end{equation*}
where: 
\begin{equation*} \label{eq: defAij_Appendix}
\begin{aligned}
    A_{11}&=\beta_{11}(\sigma P^0+S^0_{1e}+S^0_{1w}), & A_{12}&=\beta_{12}(\sigma P^0+S^0_{1e}+S^0_{1w}), \\ A_{21}&=\beta_{21}S^0_2, &
    A_{22}&=\beta_{22}S^0_2,\\ A_{31}&=\beta_{31}S_3^0, & A_{32}&=\beta_{32}S^0_3,& A_{33}&=\beta_{33} S^0_3.
\end{aligned}
\end{equation*}
The control reproduction number $\Rc$ is the spectral radius of the $\mathbf{NGM}$, given by:
\begin{equation*}
\mathbf{NGM}=\mathbf{FV}^{-1}=\begin{bmatrix}
        \mathbf{B} & 0 \\
        * & A_{33}\xi_{3}^{-1}
    \end{bmatrix},
\end{equation*}
where the entries $b_{ij}$ of the matrix $\mathbf{B}$ can be expressed as:
\begin{equation*}
    \begin{aligned}
        b_{11}&=\frac{A_{11}}{\xi_1}+\frac{A_{12}\eta_1}{\xi_1\xi_2}, & b_{12}&=\frac{A_{12}}{\xi_2}, \\
        b_{21}&=\frac{A_{21}}{\xi_1}+\frac{A_{22}\eta_1}{\xi_1\xi_2}, & b_{22}&=\frac{A_{22}}{\xi_2}.
    \end{aligned}
\end{equation*}
Thus, the first two eigenvalues of $\mathbf{FV}^{-1}$ correspond to the eigenvalues of matrix $\mathbf{B}$, while the third one is $\lambda_3=A_{33}\xi_3^{-1}$. This yields the control reproduction number
\begin{equation*}\label{eq: Rc}
    \Rc= \max \{\Rc^{(1,2)}, \Rc^{(3)} \},
\end{equation*}
where
\begin{equation*} 
    \Rc^{(1,2)} =\frac{1}{2} \left[ \frac{A_{11}}{\xi_1} + \frac{A_{12}\eta_1}{\xi_1\xi_2} + \frac{A_{22}}{\xi_2} + \sqrt{\left(\frac{A_{11}}{\xi_1} + \frac{A_{12}\eta_1}{\xi_1\xi_2} - \frac{A_{22}}{\xi_2}\right)^2 + 4\frac{A_{12}}{\xi_2} \left(\frac{A_{21}}{\xi_1} + \frac{A_{22}\eta_1}{\xi_1\xi_2}\right)}\;\right],
\end{equation*}
\begin{equation*}
    \Rc^{(3)} =\frac{\beta_{33}S_3^0}{\xi_3}.
\end{equation*}

When no control measures are implemented, the model admits the basic reproduction number $\Rzero$. Following the same procedure used to derive the control reproduction number $\Rc$, it is found that $\Rzero$ has exactly the same formal expression. The only difference is that the coefficients $A_{ij}$ are now evaluated at the disease-free equilibrium of the uncontrolled system, which is now given by 
\begin{equation*} 
    \bar E_0 = \left(0, \bar S_{1e}^0, 0, 0, 0, \bar S_2^0, 0, 0, \bar S_3^0, 0, 0\right),
\end{equation*}
and whose explicit coordinates are
\begin{equation} \label{eq: coord DFE_bar}
    \bar S_{1e}^0 = \frac{\Lambda}{\mu_1 + \eta_1}, \quad 
    \bar S_2^0 = \frac{\eta_1 \bar S_{1e}^0}{\mu_2 + \eta_2}, \quad 
    \bar S_3^0 = \frac{\eta_2 \bar S_2^0}{\mu_3}.
\end{equation}

\section{
Proof of Proposition~\ref{thm: LAS DFE} }\label{sec:LAS_DFE}

The Jacobian matrix $\mathbf{J}(E_0)$ evaluated at the disease-free equilibrium $E_0$ is
\begin{equation*}
\label{Jacobiano}
     \mathbf{J}(E_0)=\begin{bmatrix}
     \mathbf{B}_{8\times8} & \mathbf{0}_{8\times3}\\
     \mathbf{C}_{3\times8} & \mathbf{D}_{3\times3}
     \end{bmatrix},
\end{equation*}
where $\mathbf{B}$ is

\medskip

	\adjustbox{max width=0.90\textwidth}{$\displaystyle
     \begin{bmatrix}
        -(\mu_1+\omega)&\psi & 0 & -\sigma P^0 \beta_{11} & 0 & 0 & -\beta_{12} \sigma P^0 & 0 \\
        0 & -(\mu_1+\eta_1+\psi)&0 & -\beta_{11}S^0_{1e} & 0 & 0 & -\beta_{12}S^0_{1e}  & 0 \\ q\omega &0&-(\mu_1+\eta_1)&- \beta_{11}S^0_{1w}& \nu_1 &0& -\beta_{12}S^0_{1w}&0
         \\
         0 & 0 & 0&  A_{11} - \xi_1 &0 & 0 & A_{12} &  0  \\
         0 & 0 &0&  \gamma_1 & -(\mu_1+\eta_1+\nu_1) & 0 & 0 & 0  \\
         (1-q)\omega & \eta_1&\eta_1 & -\beta_{21}S^0_2 & 0 & -(\mu_2+\eta_2) & -A_{22} & \nu_2 \\
         0 & 0 &0&  A_{21} & 0 & 0 & A_{22}-\xi_2 & 0 \\
         0 & 0 & 0&0 & \eta_1 &0 & \gamma_2 & -(\mu_2+\eta_2+\nu_2) 
    \end{bmatrix},
    $}
    \bigskip
\begin{equation*}
     \mathbf{C}=\begin{bmatrix}
        
           0&0&0& -A_{31}&0&\eta_2 & -\beta_{32}S^0_3&0\\
           0&0&0&A_{31}&0&0 & A_{32}& 0\\
            0&0&0&0&0&0&0 & \eta_2 
    \end{bmatrix},
\end{equation*}
and
\begin{equation*}
\label{eq: matrix D}
     \mathbf{D}=\begin{bmatrix}
        
           -\mu_3 & -A_{33} &\nu_3\\
           0 &A_{33}-\xi_3 & 0\\
            0 & \gamma_3 & -(\mu_3+\nu_3)
    \end{bmatrix}.
\end{equation*}
Since $\mathbf{J}(E_0)$ is a block lower triangular matrix, its eigenvalues correspond to those of the submatrices $\mathbf{B}$ and $\mathbf{D}$.
The eigenvalues of the matrix $\mathbf{D}$ are trivially the diagonal elements, which are all negative when $\Rc<1$.

The first six eigenvalues of the matrix $\mathbf{B}$ are
\begin{gather*}
    \lambda_1 = -(\mu_1 + \omega), \quad \lambda_2 = -(\mu_1 + \eta_1), \quad \lambda_3 = -(\mu_1 + \eta_1 + \nu_1),\\
     \lambda_4 = -(\mu_2+\eta_2), \quad \lambda_5 = - (\mu_2 + \eta_2 + \nu_2), \quad \lambda_6=-(\mu_1+\eta_1+\psi),
\end{gather*}
while the remaining ones can be computed by solving the equation
\begin{equation*}
    (A_{11}-\xi_1 - \lambda)(A_{22}-\xi_2-\lambda) -A_{12}A_{21} =0,
\end{equation*}
that is:
\begin{equation*}
    \lambda^2 + \lambda\underbrace{(-A_{11}+\xi_1-A_{22}+\xi_2)}_b + \underbrace{(A_{11}-\xi_1)(A_{22}-\xi_2) - A_{12}A_{21}}_c=0
\end{equation*}
The roots of this characteristic equation have negative real parts if $b>0$ and $c>0$.

Let us prove that $b>0$.
By hypothesis, $\Rc^{(1,2)}<1$, that is
     \begin{align}
     \label{eq: diseq R0}
     1> \Rc^{(1,2)} &\geq\frac{1}{2} \left[ \frac{A_{11}}{\xi_1}+ \frac{A_{22}}{\xi_2} + \sqrt{ \Bigg(\frac{A_{11}}{\xi_1}-\frac{A_{22}}{\xi_2}\Bigg)^2 +4\frac{A_{21}}{\xi_1}\frac{A_{12}}{\xi_2}}\; \right] \\
     \notag
     &\geq \frac{1}{2} \left[ \frac{A_{11}}{\xi_1}+ \frac{A_{22}}{\xi_2} + \sqrt{ \Bigg(\frac{A_{11}}{\xi_1}-\frac{A_{22}}{\xi_2}\Bigg)^2}\; \right] \\
     \notag
     & \geq \max\ \left\{ \frac{A_{11}}{\xi_1},\frac{A_{22}}{\xi_2} \right \}
     \end{align}
from which 
\begin{equation*}
   \frac{ A_{11}}{\xi_1} < 1 \ ,
    \qquad
    \frac{A_{22}}{\xi_2}<1.
\end{equation*}
 It follows trivially from the latter inequalities that $b>0$ .\\
 In order to verify that $c>0$, we consider the inequality given by \eqref{eq: diseq R0}:
 \begin{equation*}
   \frac{1}{2} \left[ \frac{A_{11}}{\xi_1}+ \frac{A_{22}}{\xi_2} + \sqrt{ \Bigg(\frac{A_{11}}{\xi_1}-\frac{A_{22}}{\xi_2}\Bigg)^2 +4\frac{A_{21}}{\xi_1}\frac{A_{12}}{\xi_2}}\; \right]   <1,
 \end{equation*}
that is equivalent to
 \begin{equation*}
   \left[\sqrt{ \Bigg(\frac{A_{11}}{\xi_1}-\frac{A_{22}}{\xi_2}\Bigg)^2 +4\frac{A_{21}}{\xi_1}\frac{A_{12}}{\xi_2}}\; \right] ^2 <\left[ 2- \left( \frac{A_{11}}{\xi_1} + \frac{A_{22}}{\xi_2} \right) \right]^2,
 \end{equation*}
which in turn is equivalent to
 \begin{equation*}
    \Bigg(\frac{A_{11}}{\xi_1}-\frac{A_{22}}{\xi_2}\Bigg)^2 + 4\frac{A_{21}}{\xi_1}\frac{A_{12}}{\xi_2} < 4 + \Bigg(\frac{A_{11}}{\xi_1}+\frac{A_{22}}{\xi_2}\Bigg)^2 - 4 \Bigg(\frac{A_{11}}{\xi_1}+\frac{A_{22}}{\xi_2}\Bigg).
 \end{equation*}
Finally, we obtain the following inequality:
 \begin{equation*}
    1 + \frac{A_{11}}{\xi_1}\frac{A_{22}}{\xi_2}-\Bigg(\frac{A_{11}}{\xi_1}+\frac{A_{22}}{\xi_2} \Bigg)-\frac{ A_{12}}{\xi_2} \frac{A_{21}}{\xi_1}>0.
 \end{equation*}
By factoring out the term $\xi_1 \xi_2$ in c, using the latter inequality, we obtain:
\begin{align*}
   c&= \xi_1\xi_2 \Bigg[ \Bigg(\frac{A_{11}}{\xi_1} -1\Bigg) \Bigg(\frac{A_{22}}{\xi_2} -1\Bigg) - \frac{A_{12}}{\xi_2}  \frac{A_{21}}{\xi_1}\Bigg]\\
    &=\xi_1\xi_2 \Bigg[ \frac{A_{11}}{\xi_1} \frac{A_{22}}{\xi_2} +1 - \frac{A_{22}}{\xi_2} - \frac{A_{11}}{\xi_1} - \frac{A_{12}}{\xi_2}  \frac{A_{21}}{\xi_1}\Bigg] >0
\end{align*}

Thus, all the eigenvalues have negative real parts, implying that $E_0$ is locally asymptotically stable when $\Rc<1$. This completes the proof.

\section{
Auxiliary results for the proof of uniform persistence} \label{sec:permanence}
Let us consider the definition of $M_{\partial}$ given in Section~\ref{sec:rep_num}.
The following results hold.
\begin{lemma} \label{Lemma:M=Mpartial}
   Let $M=\{( P, S_{1e}, S_{1w},0,0,S_2,0,0,S_3,0,0)\in X\colon P, S_{1e}, S_{1w}, S_2, S_3 \geq 0\}$.\\
   Then $M = M_{\partial}$
\end{lemma}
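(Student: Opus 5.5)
The plan is to prove the two inclusions $M\subseteq M_\partial$ and $M_\partial\subseteq M$ separately, using only the sign structure of system~\eqref{eq: sys-am} and the positive invariance of $X$ from Proposition~\ref{prop:invariance}. The argument is straightforward except at one point, the senior-only and recovered-only boundary data, where the stated equality can only be reached by a convention that the paper would have to make explicit.

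\textbf{Inclusion $M\subseteq M_\partial$.} I take initial data in $M$, so $I_i(0)=R_i(0)=0$ for $i=1,2,3$. Setting $I_i\equiv R_i\equiv 0$ turns \eqref{eq: sys-am4}, \eqref{eq: sys-am5}, \eqref{eq: sys-am7}, \eqref{eq: sys-am8}, \eqref{eq: sys-am10} and \eqref{eq: sys-am11} into identities. The remaining equations reduce to the linear system
\[
\dot P = p\Lambda+\psi S_{1e}-(\mu_1+\omega)P,\qquad \dot S_{1e}=(1-p)\Lambda-(\mu_1+\eta_1+\psi)S_{1e},
\]
together with the analogous linear equations for $S_{1w}$, $S_2$ and $S_3$. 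These have nonnegative solutions. By uniqueness of solutions (the vector field $\mathbf{F}$ is polynomial), this is the trajectory of \eqref{eq: sys-am}. It has $I_i\equiv0$, so it lies in $\partial X_0$ for all $t\ge0$, which gives $M\subseteq M_\partial$. It in fact stays in $M$.

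\textbf{Inclusion $M_\partial\subseteq M$.} I take a point of $M_\partial$ and argue by contradiction through the infected components, using quasi-positivity of the infected subsystem.

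First, suppose $I_1(0)>0$ or $I_2(0)>0$. Equation \eqref{eq: sys-am7} gives $\dot I_2\ge \eta_1 I_1-\xi_2 I_2$. Equation \eqref{eq: sys-am4} gives $\dot I_1\ge \beta_{12}(\sigma P+S_{1e}+S_{1w})I_2-\xi_1 I_1$. Since $\Lambda>0$, one has $S_{1e}(t)>0$ for $t>0$, so the $(I_1,I_2)$ block is irreducible. A variation-of-constants argument then shows $I_1(t),I_2(t)>0$ for all small $t>0$. Equation \eqref{eq: sys-am10} gives $\dot I_3\ge \eta_2 I_2-\xi_3 I_3$, so $I_3(t)>0$ as well. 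The trajectory therefore enters $X_0$, contradicting membership in $M_\partial$. Hence $I_1\equiv I_2\equiv 0$ along any trajectory in $M_\partial$.

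Second, the recovered components decay like $e^{-(\mu_1+\eta_1+\nu_1)t}$, $e^{-(\mu_2+\eta_2+\nu_2)t}$ and $e^{-(\mu_3+\nu_3)t}$. They vanish identically exactly when they vanish initially.

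\textbf{Main obstacle.} This inclusion requires excluding trajectories with $I_1=I_2=0$ but $I_3(0)>0$ or $R_i(0)>0$. Such trajectories never enter $X_0$, because \eqref{eq: sys-am10} preserves $I_1=I_2=0$; this is the senior-only invariant set discussed in \autoref{sec:senior-only}. So the equality can hold only under a convention that the argument must state explicitly. I would try first to read $M$, as used in Theorem 1.3.1 of \cite{zhao2003dynamical}, modulo the decoupled components $(I_3,R_1,R_2,R_3)$. Concretely, I would note that the relevant object is the collection of $\omega$-limit sets of trajectories in $M_\partial$.

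Under that reading, the proof would establish the following. On $M_\partial$ the recovered components tend to $0$. Along trajectories with $I_1=I_2=0$, the variable $I_3$ obeys the scalar logistic-type equation $\dot I_3=\beta_{33}S_3I_3-\xi_3I_3$, which decouples; this is the case treated in \autoref{sec:senior-only}. In that case the $\omega$-limit points have the form prescribed by $M$. This is the point I would need to make precise to match the stated equality, and I expect it to be the only genuinely delicate step.
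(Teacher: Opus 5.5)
Your diagnosis is correct, and it is sharper than the paper's own proof. Because $c_{13}=c_{23}=0$, the equations for $(I_1,I_2)$ are linear in $(I_1,I_2)$ and do not involve $I_3$. So any initial point with $I_1=I_2=0$ keeps $I_1\equiv I_2\equiv 0$ and never enters $X_0$, whatever $I_3(0)$ and $R_i(0)$ are. Together with your correct argument that $I_1(0)+I_2(0)>0$ makes all three $I_i$ positive for $t>0$, this gives exactly
\[
M_\partial=\{x\in X\colon I_1=I_2=0\},
\]
which strictly contains $M$. The lemma is therefore false as stated. The paper's proof misses this. It treats only $I_1(0)>0$ ``without loss of generality'', but this is a real loss of generality, since $I_3$ does not feed back into $I_1,I_2$. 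It also never addresses $R_i(0)>0$, so its closing sentence is wrong. One small fix in your argument: when $I_1(0)=0<I_2(0)$, what you need is $\sigma P+S_{1e}+S_{1w}>0$ for $t>0$. This holds for every $p\in[0,1]$ because $\sigma>0$ and $\Lambda>0$, whereas positivity of $S_{1e}$ alone needs $p<1$.

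The genuine gap is in your proposed repair: the claim that the $\omega$-limit points of orbits in $M_\partial$ have the form prescribed by $M$, so that $\{E_0\}$ remains a valid Morse decomposition. This fails as soon as $\mathcal P_C=\Rc^{(3)}>1$. The senior-only equilibrium $\hat E$ then exists and lies in $M_\partial$, because its $I_1,I_2$ components vanish. It is its own $\omega$-limit set, and it has $\hat I_3>0$, $\hat R_3>0$, so it is neither in $M$ nor equal to $E_0$.

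Relatedly, $I_3$ does not obey a decoupled scalar equation. The equation $\dot I_3=(\beta_{33}S_3-\xi_3)I_3$ is coupled to $S_3$, which is replenished by $\nu_3R_3$. So $(S_3,I_3,R_3)$ is an SIRS system with threshold $\mathcal P_C$, and $R_3$ need not tend to $0$.

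What does hold is the following. On $M_\partial$ one has $\limsup_{t\to\infty}(S_3+I_3+R_3)\le S_3^0$. Hence, if $\mathcal P_C<1$, every orbit in $M_\partial$ tends to $E_0$, and $\{E_0\}$ is a legitimate Morse decomposition. If $\mathcal P_C>1$, the decomposition must also contain $\{\hat E\}$, and one must show $W^s(\hat E)\cap X_0=\emptyset$. The first eight equations do not involve $(S_3,I_3,R_3)$, so the $(I_1,I_2)$-block of the Jacobian at $\hat E$ is the same as at $E_0$. This step therefore needs $\Rc^{(1,2)}>1$, not merely $\Rc>1$.

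Indeed, if $\Rc^{(1,2)}<1<\mathcal P_C$, then $\hat E$ is locally asymptotically stable and nearby orbits starting in $X_0$ converge to it. Uniform persistence then fails even though $\Rc>1$. So the obstruction you found cannot be absorbed into a convention on $M$. It forces a change in the hypotheses of the persistence result that this lemma is meant to serve.
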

\begin{proof}
    Firstly, note that $M \subseteq M_{\partial}$ holds trivially. Then, let us prove that $M_{\partial} \subseteq M$, that is, if $(P(0),S_{1e}(0), S_{1w}(0), I_1(0), R_1(0), S_2(0), I_2(0), R_2(0), S_3(0),I_3(0),R_3(0)) \in M_{\partial}$, then $I_i(0)=0$ for all $i=1,2,3$. By contradiction, assume $I_1(0)>0$, without loss of generality. For $ t \in [0, T_1]$, consider the differential equation for $I_1(t)$ from system~\eqref{eq: sys-am}:
    \begin{equation*}
        \dot I_1(t) = (\sigma P + S_{1e} + S_{1w})(\beta_{11} I_1 + \beta_{12} I_2) - (\mu_1  + \gamma_1 + d_1+ \eta_1) I_1.
    \end{equation*}
    By dropping the non-negative infection terms and bounding from below, we obtain the inequality
    \begin{equation} \label{eq:I1_bounded}
        \dot I_1(t) \geq - \xi_1 I_1(t),
    \end{equation}
    where $\xi_1 := \mu_1 + \gamma_1 + d_1 + \eta_1$.  
    Solving inequality~\eqref{eq:I1_bounded}, we find
    \[
    I_1(t) \geq I_1(0) e^{-\xi_1 t}.
    \]
    Since $I_1(0) > 0$ by assumption, it follows that $I_1(t) > 0$ for all $t \in [0, T_1]$, and in particular, there exists a constant $q_1 > 0$ such that
    \[
    I_1(t) \geq q_1 \quad \text{for all } t \in [0, T_1].
    \]
    Substituting $I_1$ into equation for $I_2(t)$ in \eqref{eq: sys-am} for $t \in [0,T_1]$ we obtain
    \begin{equation*}
        \dot I_2(t)\geq \eta_1q_1 - \xi_2 I_2(t),
    \end{equation*}
    and, then
    \begin{equation*}
        I_2(t) \geq \frac{\eta_1 q_1}{\xi_2} ( 1-e^{-\xi_2T_1})>0
    \end{equation*}
    Hence, there exists a constant  $q_2 > 0$ such that $I_2(t)\geq q_2$. Using this in the equation for $I_3(t)$ and proceeding analogously, we deduce that $I_3(t)>0$ on the same interval. Thus, whenever $I_1(0)>0$, both $I_2(t)$ and $I_3(t)$ remain positive for some time. By the definition of $M_{\partial}$, this is a contradiction. The same argument holds for all other points on $\partial X_0$ except for $(P, S_{1e}, S_{1w}, 0,0,S_2,0,0,S_3,0,0)$. 
\end{proof}

\begin{lemma}\label{Lemma:E0_weakrep}
    If $\Rc>1$, then the disease free equilibrium $E_0$ of system~\eqref{eq: sys-am} is weak repeller for $X_0$, i.e.
    \begin{equation*}
        \limsup_{t \to \infty}  \dist (\Phi(t), E_0) >0
    \end{equation*}
    where $\Phi(t)=(P(t), S_{1e}(t), S_{1w}(t), I_1(t), R_1(t),S_2(t), I_2(t), R_2(t),S_3(t), I_3(t), R_3(t)) $ is an arbitrary solution of system \eqref{eq: sys-am} with any initial value in $X_0$.
\end{lemma}
\begin{proof}
   According to Leenheer and Smith (see the proof of Lemma 3.5 in \cite{smith2003virus}), it is sufficient to prove that $W^s(E_0) \cap X_0=\emptyset$, where $W^s(E_0)$
denotes the stable manifold of $E_0$. Suppose, for contradiction, that the intersection is nonempty. Then there exists a solution $(P, S_{1e}, S_{1w}, I_1,R_1, S_2, I_2,R_2, S_3, I_3,R_3)$ in $X_0$ such that 
\begin{equation*}
\begin{aligned}
    P(t) &\to P^0,\quad S_{1e}(t)\to S^0_{1e}, \quad S_{1w}(t)\to S^0_{1w}, \quad S_2(t) \to S^0_2,
    \quad S_3(t) \to S^0_3, \\
    I_i(t) &\to 0, \quad R_i(t)\to 0 \quad \text{as } t\to\infty, \quad i=1,2,3
\end{aligned}
\end{equation*}
where $P^0$, $S_{1e}^0$, $S_{1w}^0$, $S_i^0$ with $i=2,3$
are the equilibrium values at $E_0$.  

By the definition of limit, for each $\varepsilon_j>0$ there exists $T_j>0$ such that the corresponding inequality holds for all $t > T_j$:
\[
\begin{aligned}
P(t) &\ge P^0 - \varepsilon_1, & 
S_{1e}(t) &\ge S_{1e}^0 - \varepsilon_2, & 
S_{1w}(t) &\ge S_{1w}^0 - \varepsilon_3, \\
S_2(t) &\ge S_2^0 - \varepsilon_4, & 
S_3(t) &\ge S_3^0 - \varepsilon_5, & 
I_i(t) &< \varepsilon_6, \quad i=1,2,3, \\
R_i(t) &\ge 0, \quad i=1,2,3.
\end{aligned}
\]

Let
\[
T_{\max} = \max_j T_j.
\]
Then for all $t > T_{\max}$ all the inequalities above hold simultaneously.  

We next consider the dynamics of the infectious compartments. For this purpose, we introduce new variables $z_i$ ($i=1,2,3$) corresponding to $I_i$ in system~\eqref{eq: sys-am}, and define the following comparison system, for $t > T_{\max}$:
\begin{subequations}
\allowdisplaybreaks
\label{eq:comp_sys}
   \begin{align*}
    \dot z_1&=(\beta_{11}z_1 + \beta_{12}z_2)[ \sigma (P^0 - \varepsilon_1)+(S^0_{1e} - \varepsilon_2) + (S^0_{1w}-\varepsilon_3)]-\xi_1 z_1, \\
    \dot z_2&=\eta_1z_1+ (S^0_2 - \varepsilon_4)(\beta_{21} z_1 + \beta_{22}z_2) - \xi_2 z_2,\\
    \dot z_3&=\eta_2z_2+ (S^0_3- \varepsilon_5)(\beta_{31}z_1+\beta_{32} z_2 + \beta_{33}z_3) - \xi_3 z_3.
    \end{align*}
\end{subequations}

The Jacobian matrix of this system at the disease-free equilibrium $E_0$ is
\begin{equation*}
    J=J_0 -(\sigma\varepsilon_1+\varepsilon_2+\varepsilon_3)Q_1 - \varepsilon_4 Q_2 -\varepsilon_5 Q_3,
\end{equation*}
where

    \begin{equation*}
        J_0=\begin{bmatrix}
        \beta_{11}(\sigma P^0 + S_{1e}^0 +S^0_{1w}) -\xi_1 & \beta_{12}(\sigma P^0 + S_{1e}^0+S_{1w}^0) & 0 \\
        \eta_1+S_2^0\beta_{21} & \beta_{22}S_2^0 -\xi_2 & 0\\
        S^0_3\beta_{31}& \eta_2+S_3^0\beta_{32} & S^0_3 \beta_{33} - \xi_3
        \end{bmatrix},
    \end{equation*}
    \begin{equation*}
        Q_1=\begin{bmatrix}
        \beta_{11} & \beta_{12} & 0\\
        0&0&0\\
        0&0&0
        \end{bmatrix},
        \quad 
        Q_2=\begin{bmatrix}
        0&0&0\\
        \beta_{21} & \beta_{22}& 0\\
        0&0&0
        \end{bmatrix},
        Q_3=\begin{bmatrix}
        0&0&0\\
        0&0&0\\
        \beta_{31}& \beta_{32} & \beta_{33}
        \end{bmatrix}.
    \end{equation*}
    Note that $J_0$ is the Jacobian matrix of system~\eqref{eq: sys-am} evaluated at $E_0$, and $Q_1$, $Q_2$, $Q_3$ are nonnegative matrices defined accordingly. 
    Since $R_C>1$, the spectral bound of $J_0$ satisfies
\[
s(J_0)=\max\{\Re(\lambda):\lambda\in\sigma(J_0)\}>0.
\]
Moreover, by continuity of the spectral bound with respect to the matrix entries, for $\varepsilon_1,\ldots,\varepsilon_5$ sufficiently small, the matrix $J$ still satisfies $s(J)>0$.
    Since $J$ is a quasi-positive matrix, the Perron–Frobenius theorem ensures the existence of a positive eigenvector $\zeta > 0$ such that $J \zeta = s(J) \zeta$. Consequently, the solution $(z_1(t), z_2(t), z_3(t))\to \infty \quad as \quad  t \to \infty$. By the comparison principle, this implies that $(I_1(t), I_2(t), I_3(t)) \to \infty \quad as \quad t \to \infty$, which leads to a contradiction. This completes the proof.
\end{proof}

\section{Senior-only infectious equilibrium} \label{sec:senior-only}

As discussed in Section~2, it is reasonable to assume that \( c_{13} = c_{23} = 0 \). Under this assumption, the system theoretically admits an additional boundary equilibrium in which infection persists exclusively in the senior age group. This equilibrium, referred to as the \emph{senior--only endemic equilibrium}, is given by
\begin{equation*}
    \hat{E}=(\hat{P}, \hat{S}_{1e}, \hat{S}_{1w} , 0,0,\hat{S}_2, 0,0,\hat{S}_3, \hat{I}_3, \hat{R}_3),
\end{equation*} 
where
\begin{equation*}
    \hat{P}=P^0,\quad \hat{S}_{1e}=S^0_{1e},\quad \hat{S}_{1w}=S^0_{1w},\quad \hat{S}_2=S^0_2,\quad \hat{S}_3=\frac{\xi_3}{\beta_{33}}, \quad \hat{R}_3= \frac{\gamma_3 \hat{I}_3 }{\nu_3 + \mu_3},
\end{equation*}
and
\begin{equation*}
    \hat{I}_3= \left[\frac{\mu_3+\nu_3}{\mu_3\xi_3+ \nu_3(\mu_3+d_3)}\right] \frac{\xi_3\mu_3}{\beta_{33}} (\mathcal{P}_C -1),\quad \text{with}  \quad \mathcal{P}_C=\frac{\eta_2 \hat{S}_2 \beta_{33}}{\mu_3 \xi_3}.
\end{equation*}
However, this equilibrium is biologically meaningful if and only if $P_C>1$ and is unstable for all the parameter values, as shown in the following. Note that:
\begin{align} \label{eq: P0 < R0}
    \mathcal{P}_C:=\frac{\eta_2 \hat{S}_2 \beta_{33}}{\mu_3 \xi_3} = \frac{\eta_2 S_2^0}{\mu_3} \frac{\beta_{33}}{\xi_3}
    =  S_3^0\frac{\beta_{33}}{\xi_3} 
    = \frac{A_{33}}{\xi_3} =\Rc^{(3)}< \Rc.
\end{align}
Then, we have that $\mathcal{P}_C<\Rc$.

\begin{prop}
    If $\mathcal{P}_C>1$, then the equilibrium $\hat{E}$ is unstable.  
\end{prop}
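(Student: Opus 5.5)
The plan is to linearise system~\eqref{eq: sys-am} at $\hat E$ and exhibit an eigenvalue of the Jacobian $\mathbf{J}(\hat E)$ with positive real part. The starting point is that at $\hat E$ we have $I_1=I_2=0$. Moreover, the infant and children--and--adult susceptible coordinates coincide with those of $E_0$: $\hat P=P^0$, $\hat S_{1e}=S_{1e}^0$, $\hat S_{1w}=S_{1w}^0$ and $\hat S_2=S_2^0$. Only the senior coordinates differ: $\hat S_3=\xi_3/\beta_{33}$, $\hat I_3>0$ and $\hat R_3>0$.

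\textbf{Step 1: isolate the $(I_1,I_2)$ block.} Since $\beta_{13}=\beta_{23}=0$, the equations for $I_1$ and $I_2$ do not contain $I_3$. Every term in $\dot I_1,\dot I_2$ is either linear in $(I_1,I_2)$ or a product of a susceptible variable with $I_1$ or $I_2$. Hence the rows of $\mathbf{J}(\hat E)$ corresponding to $I_1,I_2$ have nonzero entries only in the $I_1,I_2$ columns. This gives the invariant $2\times 2$ block
\[
\mathbf{K}=\begin{bmatrix} A_{11}-\xi_1 & A_{12}\\ \eta_1+A_{21} & A_{22}-\xi_2\end{bmatrix}.
\]
Because $\hat S_2=S_2^0$ and the infant susceptibles equal their DFE values, the $A_{ij}$ are exactly the coefficients of Section~\ref{sec:rep_num}. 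Ordering the variables as $(I_1,I_2)$ followed by the rest, $\mathbf{J}(\hat E)$ is block triangular, so the eigenvalues of $\mathbf{K}$ are eigenvalues of $\mathbf{J}(\hat E)$.

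\textbf{Step 2: relate $\mathbf{K}$ to $\Rc^{(1,2)}$.} The block $\mathbf{K}$ equals $\mathbf{F}_{12}-\mathbf{V}_{12}$, where $\mathbf{F}_{12}$ and $\mathbf{V}_{12}$ are the upper-left $2\times2$ blocks of the transmission and transition matrices of Appendix~\ref{sec: DFE}. The first of these is nonnegative; the second is a nonsingular M-matrix. By the standard result of van den Driessche and Watmough \cite{van2002reproduction}, $s(\mathbf{F}_{12}-\mathbf{V}_{12})>0$ if and only if $\rho(\mathbf{F}_{12}\mathbf{V}_{12}^{-1})>1$, and the latter spectral radius is $\rho(\mathbf{B})=\Rc^{(1,2)}$. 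Alternatively, one can argue directly as in Appendix~\ref{sec:LAS_DFE}, reversing the computation there: $\Rc^{(1,2)}>1$ forces $\det \mathbf{K}<0$ (or trace positive), which yields a real positive root of the characteristic quadratic.

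\textbf{Step 3: show $\Rc^{(1,2)}>1$.} Here I would use \eqref{eq: P0 < R0}, which gives $1<\mathcal{P}_C=\Rc^{(3)}<\Rc=\max\{\Rc^{(1,2)},\Rc^{(3)}\}$. The strict inequality forces $\Rc=\Rc^{(1,2)}$, hence $\Rc^{(1,2)}>\mathcal{P}_C>1$. Then $\mathbf{K}$, and therefore $\mathbf{J}(\hat E)$, has a positive eigenvalue, and $\hat E$ is unstable.

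\textbf{Main obstacle.} The delicate point is Step~3. The instability rests entirely on the strict inequality $\Rc^{(3)}<\Rc$ stated in \eqref{eq: P0 < R0}, that is, on $\Rc^{(1,2)}$ dominating $\Rc^{(3)}$. If $\Rc^{(1,2)}<1<\Rc^{(3)}$ could occur, then the $(I_1,I_2)$ block would be stable. Instability would then have to come from the senior block, and at the endemic senior state that block is typically stable. I would therefore take \eqref{eq: P0 < R0} as given from the preceding discussion, and check explicitly that the $(I_1,I_2)$ rows of $\mathbf{J}(\hat E)$ decouple, using $\beta_{13}=\beta_{23}=0$ as assumed in Section~2.
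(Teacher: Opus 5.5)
Your Steps~1--2 are correct and follow the paper's route: block-triangular $\mathbf{J}(\hat E)$, then reduction to the infant/adult infection block. Your $\mathbf{K}$, with entry $\eta_1+A_{21}$, is in fact the correct linearisation; the matrix $\mathbf{B}$ in Appendix~\ref{sec:LAS_DFE} drops the $\eta_1$.

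The gap is Step~3, and it cannot be repaired without changing the statement. Because $\Rc=\max\{\Rc^{(1,2)},\Rc^{(3)}\}$, the chain in \eqref{eq: P0 < R0} only gives $\mathcal{P}_C=\Rc^{(3)}\le\Rc$. The strict inequality is equivalent to $\Rc^{(1,2)}>\Rc^{(3)}$, and nothing in the model enforces it:
\begin{itemize}
\item $\Rc^{(1,2)}$ does not involve $\beta_{33}$.
\item $\mathcal{P}_C$ is proportional to $\beta_{33}$ and independent of $\beta_{11},\beta_{12},\beta_{21},\beta_{22}$, since the DFE coordinates do not depend on any $\beta_{ij}$.
\end{itemize}
So you can scale $\beta_{11},\beta_{12},\beta_{21},\beta_{22}$ down until $\Rc^{(1,2)}<1$ and raise $\beta_{33}$ until $\mathcal{P}_C>1$. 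For example, the values of Table~\ref{tab:par} with $\beta_{22}$ lowered by $10\%$ and $\beta_{33}=2\cdot 10^{-8}$ give $\Rc^{(1,2)}\approx 0.92$ and $\mathcal{P}_C\approx 1.2$. The paper's own proof rests on the same unproved inequality. It then also infers an unstable eigenvalue of $\mathbf{B}$ from $\Rc>1$, which fails whenever $\Rc=\Rc^{(3)}$. So ``taking \eqref{eq: P0 < R0} as given'' imports the error rather than closing it.

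The scenario you worry about in your last paragraph is therefore real, and your guess about the senior block is right. At $\hat E$ one has $\partial \dot I_3/\partial I_3=\beta_{33}\hat S_3-\xi_3=0$. With $e=\beta_{33}\hat I_3>0$, the senior block has characteristic polynomial $\lambda^3+(e+2\mu_3+\nu_3)\lambda^2+[(e+\mu_3)(\mu_3+\nu_3)+\xi_3 e]\lambda+e[\mu_3\xi_3+\nu_3(\mu_3+d_3)]$. All its coefficients are positive. The product of the $\lambda^2$ and $\lambda$ coefficients exceeds $(\mu_3+\nu_3)\xi_3 e$, which in turn exceeds the constant term $e[\xi_3(\mu_3+\nu_3)-\nu_3\gamma_3]$. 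So the Routh--Hurwitz conditions hold.

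The other six eigenvalues of $\mathbf{J}(\hat E)$ are the negative diagonal entries listed in Appendix~\ref{sec:LAS_DFE}. Moreover, $s(\mathbf{K})<0$ when $\Rc^{(1,2)}<1$, by the same van den Driessche--Watmough lemma you use. Hence in that regime $\hat E$ exists and is locally asymptotically stable, and the proposition is false as stated.

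What your Steps~1--2 actually prove is the corrected version: if $\mathcal{P}_C>1$, then $\hat E$ is unstable when $\Rc^{(1,2)}>1$ and locally asymptotically stable when $\Rc^{(1,2)}<1$. The condition $\Rc^{(1,2)}>1$ must be added as a hypothesis; it cannot be derived.
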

\begin{proof}
The Jacobian matrix evaluated at the equilibrium $\hat{E}$ has the block lower-triangular structure
 \begin{equation*}
     \mathbf{J}(\hat{E})=\begin{bmatrix}
         \mathbf{B}_{8\times8} & \mathbf{0}_{8\times3} \\ 
         \mathbf{C}_{3\times8} & \mathbf{D}_{3\times3} 
     \end{bmatrix}
 \end{equation*}
 where the submatrix $\mathbf{B}$ is the same as that defined in Proposition~\ref{thm: LAS DFE}. Since $\mathbf{J}(\hat{E})$ has a lower block--triangular structure, the eigenvalues of $\mathbf{J}(\hat{E})$ are those of the submatrices $\mathbf{B}$ and $\mathbf{D}$. We claim that if $\mathcal{P}_C>1$, the matrix $\mathbf{B}$ admits at least one eigenvalue with positive real part. 
 
 This follows from the inequality \eqref{eq: P0 < R0}, which guarantees that $\mathcal{P}_C>1$ implies $\Rc>1$. In turn, the condition  $\Rc>1$ implies that the Jacobian submatrix $\mathbf{B}$ has at least one eigenvalue with positive real part. Hence, under the condition $\mathcal{P}_C>1$ the Jacobian matrix $\mathbf{J}(\hat{E})$ has at least one eigenvalue with positive real part, and the equilibrium $\hat{E}$ is therefore unstable.
\end{proof}

\bibliographystyle{abbrv}
\bibliography{Bibliography}
\end{document}